\newcommand{\nc}{\newcommand}
\newtheorem{theorem}{Theorem}[section]
\newtheorem{example}[theorem]{Example}
\nc{\bex}{\begin{example}}
\nc{\eex}{\end{example}}
\nc{\bea}{\begin{eqnarray}}
\nc{\eea}{\end{eqnarray}}
\nc{\ben}{\begin{eqnarray*}}
\nc{\een}{\end{eqnarray*}}
\nc{\bse}{\begin{subequations}}
\nc{\ese}{\end{subequations}}
\nc{\nn}{\nonumber}
\nc{\half}{\ensuremath{\frac{1}{2}}}
\nc{\Hom}{\operatorname{Hom}}
\nc{\End}{\operatorname{End}}
\nc{\vac}{|\textrm{vac}\rangle}
\nc{\tvac}{|\widetilde{\textrm{vac}}\rangle}
\nc{\dvac}{\langle\textrm{vac}}
\nc{\dtvac}{\langle\widetilde{\textrm{vac}}}
\nc{\id}{\mathbb{I}}
\nc{\Tr}{{\rm Tr}}
\nc{\tr}{{\rm Tr}}
\nc{\ree}[1]{\hbox{Re}({#1})}
\nc{\ws}{\;\;}
\newlength\celldim \newlength\fontheight \newlength\extraheight
\newcounter{sqcolumns}
\newcolumntype{S}{ @{}
>{\centering \rule[-0.5\extraheight]{0pt}{\fontheight + \extraheight}}
p{\celldim} @{} }
\newcolumntype{Z}{ @{} >{\centering} p{\celldim} @{} }
\nc{\ko}{\mbox{$k-1$}}
\nc{\kp}{\mbox{$k+1$}}
\nc{\kth}{\mbox{$k-3$}}
\nc{\kt}{\mbox{$k-2$}}
\nc{\nmo}{\mbox{$n-1$}}
\nc{\nmt}{\mbox{$n-2$}}
\nc{\nmth}{\mbox{$n-3$}}
\nc{\jo}{\mbox{$n-k-h+j+g+1$}}
\nc{\jt}{\mbox{$n-k-h+j+g+2$}}
\nc{\nkj}{\mbox{$n-k+j$}}
\nc{\nkgo}{\mbox{$n-h+g+1$}}
\nc{\nkg}{\mbox{$n-h+g$}}
\nc{\nkjo}{\mbox{$n-k+j+1$}}
\nc{\kho}{\mbox{$k+h-1$}}
\nc{\kht}{\mbox{$k+h-2$}}
\nc{\khth}{\mbox{$k+h-3$}}
\nc{\khf}{\mbox{$k+h-4$}}
\nc{\ho}{\mbox{$h-1$}}
\nc{\hp}{\mbox{$h+1$}}
\nc{\jkho}{\mbox{$j+k+h-1$}}
\nc{\jh}{\mbox{$n-k+j_1$}}
\nc{\jho}{\mbox{$n-k+j-1$}}
\nc{\jk}{\mbox{$n-h+1$}}
\nc{\nht}{\mbox{$n-h+2$}}
\nc{\jko}{\mbox{$n-h+g-1$}}
\nc{\nhgo}{\mbox{$n-h+g+1$}}
\nc{\jkt}{\mbox{$n-h+g-2$}}
\nc{\njhk}{\mbox{$n-k-h+j+g$}}
\nc{\mkh}{\mbox{Max$(k,h)$}}
\nc{\uo}{\mbox{$u+1$}}
\nc{\ut}{\mbox{$u+2$}}
\nc{\uth}{\mbox{$u+3$}}
\nc{\uno}{\mbox{$n+u-1$}}
\nc{\unt}{\mbox{$n+u-2$}}
\nc{\unkj}{\mbox{$n-k+u+j$}}
\nc{\unkjo}{\mbox{$n-k+u+j-1$}}
\nc{\nkog}{\mbox{$n-k-1+g$}}
\nc{\nhg}{\mbox{$n-k+g$}}
\nc{\nhjh}{\mbox{$n-h+1+j_h$}}
\nc{\nhjho}{\mbox{$n-k-h+2+j_{h-1}$}}
\nc{\nhjt}{\mbox{$n-k-1+j_{2}$}}
\nc{\nkhjo}{\mbox{$n-k+j_{1}+1$}}
\nc{\e}{\mbox{e}}
\nc{\ga}{\alpha}
\nc{\gb}{\beta}
\nc{\gd}{\delta}
\nc{\gep}{\varepsilon}
\nc{\gz}{\zeta}
\nc{\gt}{\theta}
\nc{\gk}{\kappa}
\nc{\gl}{\lambda}
\nc{\gp}{\phi}
\nc{\gs}{\sigma}
\nc{\go}{\omega}
\nc{\gn}{\nu}
\nc{\gr}{\rho}
\nc{\gm}{\mu}
\nc{\gou}{\underline{\go}}
\nc{\un}{\underline{n}}
\nc{\um}{\underline{m}}
\nc{\uw}{\underline{\go}}
\nc{\s}{\sigma}
\nc{\ep}{\varepsilon}
\nc{\z}{\zeta}
\nc{\g}{\gamma}
\nc{\zi}{\zeta^{-1}}
\nc{\gG}{\Gamma}
\nc{\gD}{\Delta}
\nc{\gT}{\Theta}
\nc{\gL}{\Lambda}
\nc{\gO}{\Omega}
\nc{\gP}{\Phi}
\nc{\cF}{\mathcal{F}}
\nc{\cP}{\mathcal{P}}
\nc{\cS}{\mathcal{S}}
\nc{\cN}{\mathcal{N}}
\nc{\cD}{\mathcal{D}}
\nc{\cH}{\mathcal{H}}
\nc{\cO}{\mathcal{O}}
\nc{\cT}{\mathcal{T}}
\nc{\cQ}{\mathcal{Q}}
\nc{\cW}{\mathcal{W}}
\nc{\cR}{\mathcal{R}}
\nc{\cC}{\mathcal{C}}
\nc{\pt}{\mathcal{P}\mathcal{T}}
\nc{\C}{\mathbb{C}}
\nc{\Q}{\mathbb{Q}}
\nc{\R}{\mathbb{R}}
\nc{\Z}{\mathbb{Z}}
\nc{\N}{\mathbb{N}}
\nc{\fg}{\mathfrak{g}}
\nc{\barx}{\bar{x}}
\nc{\bi}{\bar{i}}
\nc{\bj}{\bar{j}}
\nc{\bgr}{\bar{\rho}}
\nc{\bA}{\bar{\alpha}}
\nc{\bB}{\bar{\beta}}
\nc{\bC}{\bar{\gamma}}
\nc{\by}{\bar{y}}
\nc{\brv}{\overline{V}}
\nc{\brp}{\overline{P}}
\nc{\T}{\tilde{T}}
\nc{\tf}{\tilde{f}}
\nc{\te}{\tilde{e}}
\nc{\ts}{\tilde{s}}
\nc{\tgP}{\widetilde{\Phi}}
\nc{\tgPs}{\tilde{\Psi}}
\nc{\tgn}{\tilde{\nu}}
\nc{\tgl}{\tilde{\lambda}}
\nc{\tge}{\tilde{\eta}}
\nc{\txi}{\tilde{\xi}}
\nc{\tep}{\tilde{\epsilon}}
\nc{\tx}{\tilde{x}}
\nc{\cB}{\check{b}}
\nc{\cOm}{\check{\Omega}}
\nc{\goto}{\mapsto}
\DeclareMathSymbol{\Gamma}{\mathalpha}{letters}{"00}
\DeclareMathSymbol{\Delta}{\mathalpha}{letters}{"01}
\DeclareMathSymbol{\Theta}{\mathalpha}{letters}{"02}
\DeclareMathSymbol{\Lambda}{\mathalpha}{letters}{"03}
\DeclareMathSymbol{\Xi}{\mathalpha}{letters}{"04}
\DeclareMathSymbol{\Pi}{\mathalpha}{letters}{"05}
\DeclareMathSymbol{\Sigma}{\mathalpha}{letters}{"06}
\DeclareMathSymbol{\Upsilon}{\mathalpha}{letters}{"07}
\DeclareMathSymbol{\Phi}{\mathalpha}{letters}{"08}
\DeclareMathSymbol{\Psi}{\mathalpha}{letters}{"09}
\DeclareMathSymbol{\Omega}{\mathalpha}{letters}{"0A}
\newcommand{\dd}{\mathrm{d}}
\newcommand{\ii}{\mathrm{i}}
\newcommand{\be}{\begin{equation}}
\newcommand{\ee}{\end{equation}}
\newcommand{\state}[1]{\mathopen{|}#1\mathclose{\rangle}}
\newcommand{\ket}[1]{\left|#1\right\rangle}      
\newcommand{\bra}[1]{\left\langle #1\right|}     
\newcommand{\alg}[1]{\mathfrak{#1}}
\def\mr@ignsp#1 {\ifx\:#1\@empty\else #1\expandafter\mr@ignsp\fi}%
\newcommand{\multiref}[1]{\begingroup
\xdef\mr@no@sparg{\expandafter\mr@ignsp#1 \: }%
\def\mr@comma{}%
\@for\mr@refs:=\mr@no@sparg\do{\mr@comma\def\mr@comma{,}\ref{\mr@refs}}%
\endgroup}
\newcommand{\hypref}[2]{\ifx\href\asklfhas #2\else\href{#1}{#2}\fi}
\newcommand{\Secref}[1]{Section~\multiref{#1}}
\newcommand{\secref}[1]{Sec.~\multiref{#1}}
\newcommand{\figref}[1]{Fig.~\multiref{#1}}
\renewcommand{\eqref}[1]{(\multiref{#1})}
\begin{document}
\title{Multiple integral formula for the off-shell six vertex scalar product}
\author{Jan de Gier$^1$, Wellington Galleas$^2$ and Mark Sorrell$^{1,2}$\\[5mm]
\textit{
\begin{minipage}{0.9\textwidth}\small
${}^1$Department of Mathematics and Statistics, The University of Melbourne, VIC 3010, Australia\\ 
\end{minipage}}\\
\textit{
\begin{minipage}{0.9\textwidth}\small
${}^2$ARC Centre of Excellence for the Mathematics and Statistics of Complex Systems, The University of Melbourne, VIC 3010, Australia\\ 
\end{minipage}}
}

\maketitle

\begin{abstract}
We write a multiple integral formula for the partition function of the Z-invariant six vertex model and 
demonstrate how it can be specialised to compute the norm of Bethe vectors. We also discuss the possibility of computing three-point functions in $\mathcal{N}=4$ SYM using these integral formul\ae. 
\end{abstract}

\section{Introduction}
The six vertex model is one of the most studied solvable lattice model in statistical mechanics \cite{Baxter82}. Using the Bethe ansatz for eigenvectors and eigenvalues of the transfer matrix, Lieb \cite{Lieb67_Oct,Lieb67_April} and Sutherland \cite{Sutherland67} solved the regular square lattice zero-field six-vertex model with periodic boundary conditions. By computing the largest eigenvalue they obtained explicit solutions for the free-energy per site in the thermodynamic limit. 

The Bethe ansatz solution for a periodic square lattice with $L^2$ sites requires that for each eigenvalue one has to solve $n$ non-linear
equations in $n$ unknowns (the `wave numbers' $k_1,\ldots,k_n$). This can only be done explicitly 
in the large-$L$ limit under the `string hypothesis'. Usually one can solve those equations explicitly only for the largest and near-largest eigenvalues, which limits one to considering the full infinite square lattice. 

On the infinite square lattice vertex operator methods are available (see e.g. \cite{Jimbo_Miwa94}), which have 
resulted in multiple integral expression for correlation functions of the model.

The situation is different when one considers fixed boundary conditions. Izergin and Korepin constructed an explicit determinant 
solution for the case of `domain wall boundary conditions' \cite{Izergin87,Korepin82}. Alternative representations for this same partition function have also been proposed by other authors. See for instance 
\cite{Stroganov04,Stroganov06,Khoro_Paku05,Lascoux07,Galleas10,Galleas11} and references therein.

In \cite{Baxter87}, Baxter constructs the partition function for a general $Z$-invariant six-vertex model, which is given 
explicitly (apart from a simple factor) by a Bethe ansatz type expression. However, in this case the `wave numbers' do not need to
be evaluated from a complicated set of simultaneous equations: they are known explicitly. The disadvantage of Baxter's expression however is that it is given in terms of a large summation over the symmetric group.

The main results of this paper is to rewrite Baxter's expression as a multiple contour integral over a factorised polynomial kernel. As an 
application of this result we specialise it to the regular square lattice with two specific boundary conditions. 
This allows us to write a new representation for the partition function of the six vertex model with domain wall boundary conditions
as a multiple contour integral. Likewise we derive an off-shell multiple integral expression for the scalar product of two Bethe vectors 
of the six-vertex model transfer matrix. 

Such formul\ae are important as recent developments in the computation 
of three-point functions in $\mathcal{N}=4$ Super Yang-Mills (SYM) seem to require manageable expressions for the norms of Bethe vectors for general values of the parameters. With this in mind, in Section~\ref{sec:3pt} we illustrate how the integral formul\ae presented here can be embedded in the framework of \cite{Escobedo_2011}. 

\section{The six vertex model}
\label{sec:vertexmodel}

The exact solvability of a vertex model in the sense of Baxter \cite{Baxter_book} is intimately related to the solutions of the Yang-Baxter equation. This equation reads 
\be 
\label{YB}
\mathcal{L}_{12} (\lambda - \mu) \mathcal{L}_{13} (\lambda) \mathcal{L}_{23} (\mu) = \mathcal{L}_{23} (\mu) \mathcal{L}_{13} (\lambda) \mathcal{L}_{12} (\lambda - \mu) 
\ee
where $\mathcal{L}_{ij} \in \mbox{End}( \mathbb{V}_i  \otimes \mathbb{V}_j )$. The complex variables $\lambda$ and $\mu$ correspond to spectral parameters while $\mathbb{V}_i$ denotes a complex vector space. We refer to the solutions of (\ref{YB}) as $\mathcal{L}$-matrices and for the six vertex model the corresponding $\mathcal{L}$-matrix is invariant under the $U_{q}[\widehat{\alg{sl}}(2)]$ algebra in the fundamental representation. In that case $\mathbb{V}_i \simeq \mathbb{C}^2$ and the associated $\mathcal{L}$-matrix explicitly reads
\be
\label{su2}
\mathcal{L} = 
\left(
\begin{array}{cccc}
 a & 0 & 0 & 0 \\
 0 & b & c & 0 \\
 0 & c & b & 0 \\
 0 & 0 & 0 & a 
\end{array}
\right)
\ee
where 
\be
\label{weightsdef}
a (\lambda) = \sinh{(\lambda+\gamma)},\quad b (\lambda) = \sinh{(\lambda)}\quad \text{and}\quad c(\lambda) = \sinh{(\gamma)}.
\ee

To characterise the vertex model associated with a solution of the Yang-Baxter equation we shall employ the following notation,
\be
\label{decomp}
\mathcal{L} = \sum_{\alpha,\beta,\gamma,\delta} \mathcal{L}^{\beta \delta}_{\alpha \gamma} \; \hat{e}_{\alpha \beta} \otimes \hat{e}_{\gamma \delta} \; ,
\ee
where 
$\hat{e}_{\alpha \beta} = \ket{\alpha} \bra{\beta}$ where the vectors $\ket{\alpha}$, as well as their duals, form a basis of $\mathbb{V}_i$. In this way  we associate the Boltzmann weight $\mathcal{L}^{\beta \delta}_{\alpha \gamma}$ to the vertex configuration $\{\alpha, \beta, \gamma, \delta \}$ as shown in the \figref{fig:bw}.
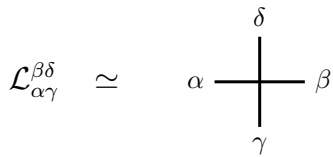
\begin{figure}[h]\centering
\begin{tikzpicture}[scale=0.6,line width=0.4mm]
\draw (-2.3,0) node[left] {{\large $\mathcal{L}^{\beta \delta}_{\alpha \gamma} \;\;\; \simeq \;\;\;$}};
\draw (-1,0) node [left] {$\alpha$} -- (1,0) node[right]{$\beta$};
\draw (0,-1) node[below]{$\gamma$} -- (0,1) node[above]{$\delta$};
\end{tikzpicture}
\caption{Vertex $\{\alpha, \beta, \gamma, \delta \}$ and its Boltzmann weight.}
\label{fig:bw}
\end{figure}

For the six vertex model the indices $\alpha$ and $\beta$ in (\ref{decomp}) run through the set $\{\uparrow , \downarrow \}$ while $\gamma$ and $\delta$ run through $\{ \rightarrow , \leftarrow \}$ 
defined as
\be
\state{\uparrow} = \state{\rightarrow} = \left( \begin{matrix}
1 \cr
0 \end{matrix} \right) \in \mathbb{V}_i \qquad \mbox{and} \qquad
\state{\downarrow} = \state{\leftarrow} = \left( \begin{matrix}
0 \cr
1 \end{matrix} \right) \in \mathbb{V}_i  \; .
\ee
From (\ref{su2}) we thus have six possible vertex configurations which are depicted in the \figref{fig:vertex6}.
\begin{figure}[h]\centering
\begin{tikzpicture}[scale=0.5,line width=0.4mm]
\draw[>->] (-1,0) -- (1,0) ;
\draw[>->] (0,-1) node[below]{$a$} -- (0,1);
\draw[<-<] (4,0) -- (6,0);
\draw[<-<] (5,-1) node[below]{$a$} -- (5,1);
\draw[>->] (9,0) -- (11,0) ;
\draw[<-<] (10,-1) node[below]{$b$} -- (10,1);
\draw[<-<] (14,0) -- (16,0) ;
\draw[>->] (15,-1) node[below]{$b$} -- (15,1);
\draw[>-<] (19,0) -- (21,0);
\draw[<->] (20,-1) node[below]{$c$} -- (20,1);
\draw[<->] (25,0) -- (27,0);
\draw[>-<] (26,-1) node[below]{$c$} -- (26,1);
\end{tikzpicture}
\caption{The 6 vertex configurations of the model.}
\label{fig:vertex6}
\end{figure}
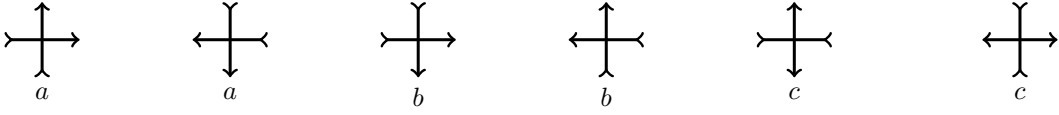
The vertices described in \figref{fig:vertex6} can be juxtaposed in a two-dimensional lattice with $N$ rows and $L$ columns as shown in the \figref{fig:lattice}. 
\begin{figure}[h]\centering
\begin{tikzpicture}[scale=0.5,line width=0.4mm]
{\scriptsize
\draw (-2,0) node[left]{$\alpha_{N,1}$} -- (10,0) node[right]{$\alpha_{N,L+1}$};
\draw (0,-2) -- (0,2) node[above]{$\beta_{N+1,1}$};
\draw (2,-2) -- (2,2);
\draw (3,2) node[above]{\dots};
\draw (4,-2) -- (4,2);
\draw (5,2) node[above]{\dots};
\draw (6,-2) -- (6,2);
\draw (8,-2) -- (8,2) node[above]{$\;\;\;\;\; \beta_{N+1,L}$};
\begin{scope}[yshift=-2.2cm]
\draw (-2,0) node[left]{\vdots} -- (10,0) node[right]{\vdots};
\draw (0,-2) -- (0,2);
\draw (2,-2) -- (2,2);
\draw (4,-2) -- (4,2);
\draw (6,-2) -- (6,2);
\draw (8,-2) -- (8,2);
\end{scope}
\begin{scope}[yshift=-4.4cm]
\draw (-2,0) node[left]{\vdots} -- (10,0) node[right]{\vdots};
\draw (0,-2) -- (0,2);
\draw (2,-2) -- (2,2);
\draw (4,-2) -- (4,2);
\draw (6,-2) -- (6,2);
\draw (8,-2) -- (8,2);
\end{scope}
\begin{scope}[yshift=-6.6cm]
\draw (-2,0) node[left]{$\alpha_{1,1}$} -- (10,0) node[right]{$\alpha_{1,L+1}$};
\draw (0,-2) node[below]{$\beta_{1,1}$} -- (0,2);
\draw (2,-2) -- (2,2);
\draw (3,-2) node[below]{\dots};
\draw (4,-2) -- (4,2);
\draw (5,-2) node[below]{\dots};
\draw (6,-2) -- (6,2);
\draw (8,-2) -- (8,2);
\draw (8,-2) node[below]{$\beta_{1,L}$} -- (8,2);
\end{scope}}
\end{tikzpicture}
\caption{Two dimensional square lattice with boundary conditions index by $\alpha_{i,j}$ and $\beta_{i,j}$ who take values in $\{\leftarrow,\rightarrow\}$ and $\{\uparrow,\downarrow\}$ respectively.}
\label{fig:lattice}
\end{figure}
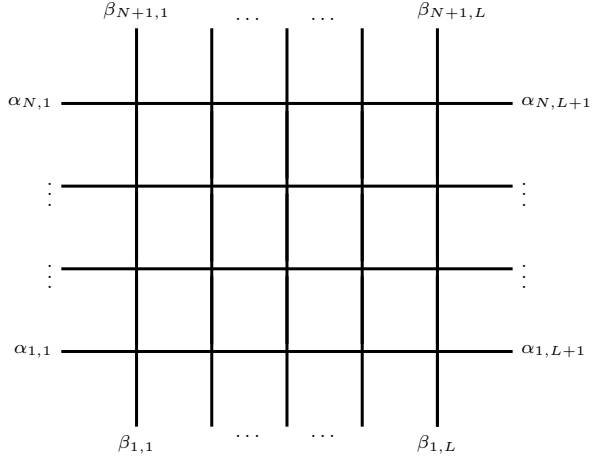
The probability of having the vertex $\{ \alpha_{i,j}, \alpha_{i,j+1}, \beta_{i,j} , \beta_{i+1,j} \}$ at the 
Cartesian coordinates $(i,j)$ is weighted by the factor $\mathcal{L}_{\alpha_{i,j} \beta_{i,j}}^{\alpha_{i,j+1} \beta_{i+1,j} }$ and the partition function of the system is formed by the summation over all possible configurations of the product of vertex weights. More precisely, the partition function $Z$ is defined by
\be
\label{partition}
Z = \sum_{\{ \alpha_{i,j} , \beta_{i,j} \}} \prod_{i=1}^{N} \prod_{j=1}^{L} \mathcal{L}_{\alpha_{i,j} \beta_{i,j}}^{\alpha_{i,j+1} \beta_{i+1,j} } \; .
\ee

It is possible to generalise this construction in the following way to include inhomogeneities while maintaining integrability. 
To each horizontal line $i$ of the square lattice we associate a variable $\lambda_i$, and to each vertical $j$ a variable $\mu_j$. 
The local Boltzmann weight for a vertex $\{ \alpha_{i,j}, \alpha_{i,j+1}, \beta_{i,j} , \beta_{i+1,j} \}$  is then given by $\mathcal{L}_{\alpha_{i,j} \beta_{i,j}}^{\alpha_{i,j+1} \beta_{i+1,j} }(\lambda_i-\mu_j)$
and the inhomogeneous partition function is defined as 
\be
\label{partition2}
Z(\{\lambda_i\},\{\mu_j\}) = \sum_{\{ \alpha_{i,j} , \beta_{i,j} \}} \prod_{i=1}^{N} \prod_{j=1}^{L} \mathcal{L}_{\alpha_{i,j} \beta_{i,j}}^{\alpha_{i,j+1} \beta_{i+1,j} } (\lambda_i-\mu_j) \; .
\ee

Boundary conditions play a fundamental role in the evaluation of the partition function \eqref{partition}. In this paper we will consider 
periodic boundary conditions in the horizontal and vertical directions ($\alpha_{i,L+1}=\alpha_{i,1}$ and $\beta_{N+1,j}=\beta_{j,1}$) 
as well as fixed boundary conditions with specified values of $\alpha_{i,j}$ and $\beta_{i,j}$ at the borders.

\subsection{Monodromy and transfer matrix.} Each row of the two-dimensional lattice formed by the juxtaposition of vertices can be conveniently characterised by a matrix $\mathcal{T} (\lambda) = \mathcal{T} (\lambda , \{ \mu_j \})$ usually referred to as monodromy matrix. This matrix has components
\be
\label{eq:mono1}
\mathcal{T}^{\alpha', \beta'_1 \dots \beta'_L}_{\alpha, \beta_1 \dots \beta_L} (\lambda , \{ \mu_j \}) = \mathcal{L}^{\alpha_1 \beta'_1}_{\alpha \; \beta_1}  (\lambda - \mu_1)  \mathcal{L}^{\alpha_2 \beta'_2}_{\alpha_1 \beta_2} (\lambda - \mu_2) \dots \mathcal{L}^{\alpha_L \beta'_{L-1}}_{\alpha_{L-1} \beta_{L-1}} (\lambda - \mu_{L-1}) \mathcal{L}^{\alpha' \; \beta'_L}_{\alpha_L \beta_L} (\lambda - \mu_L)
\ee
which is diagrammatically represented in the \figref{fig:mono}. 
\begin{figure}[h]\centering
\begin{tikzpicture}[scale=0.4,line width=0.4mm]
\draw (-2.3,0) node[left]{{\large $\mathcal{T}^{\alpha', \beta'_1 \dots \beta'_L}_{\alpha, \beta_1 \dots \beta_L} \;\;\; \simeq \;\;\;$}};
\draw (-2,0) node[left]{$\alpha$} -- (14,0) node[right]{$\alpha'$};
\draw (0,-2) node[below]{$\beta_1$} -- (0,2) node[above]{$\beta'_1$};
\draw (2,-2) node[below]{$\beta_2$} -- (2,2) node[above]{$\beta'_2$};
\draw (4,-2) -- (4,2);
\draw (5,-2) node{\dots};
\draw (5,2) node{\dots};
\draw (6,-2) -- (6,2);
\draw (7,-2) node{\dots};
\draw (7,2) node{\dots};
\draw (8,-2) -- (8,2);
\draw (10,-2) node[below]{$\beta_{L-1}$} -- (10,2) node[above]{$\beta'_{L-1}$};
\draw (12,-2) node[below]{$\beta_L$} -- (12,2) node[above]{$\beta'_L$};
\end{tikzpicture}
\caption{Diagrammatic representation of the monodromy matrix.}
\label{fig:mono}
\end{figure}
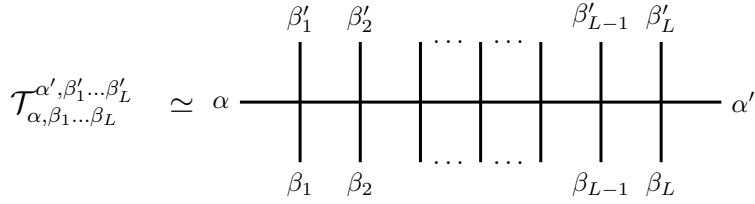
In this way the two-dimensional lattice described in the \figref{fig:lattice} can be completely built in terms of the operators $\mathcal{T} (\lambda , \{ \mu_j \})$.

In a more compact form we have
\be
\label{mono}
\mathcal{T}(\lambda, \{ \mu_j \}) = \mathcal{L}_{\mathcal{A} 1} (\lambda - \mu_1) \dots \mathcal{L}_{\mathcal{A} L} (\lambda - \mu_L).
\ee
which is an operator on the tensor product space $\mathbb{V}_{\mathcal{A}} \otimes \mathbb{V}_{1} \otimes \dots \otimes  \mathbb{V}_{L}$. The space $\mathbb{V}_{\mathcal{A}}$ is usually called the auxiliary space while we shall refer to the tensor product $\mathbb{V}_{1} \otimes \dots \otimes  \mathbb{V}_{L}$ as the quantum space. Thus the monodromy matrix $\mathcal{T}(\lambda)$ is a matrix in the auxiliary space whose elements $\mathcal{T}^{\alpha'}_{\alpha}$ act on the quantum space. In the diagrammatic representation of the monodromy matrix given in \figref{fig:mono}, the contraction of indices labelling the horizontal edges represent the product in the auxiliary space while the product in the quantum space is given by the contraction of indices labelling the vertical edges. For the six vertex model the monodromy matrix can be conveniently denoted as
\be
\label{monorep}
\mathcal{T} (\lambda) = \mathcal{T}(\lambda, \{ \mu_j \}) = \left( \begin{matrix}
A(\lambda) &  B (\lambda) \cr
C (\lambda) & D (\lambda) \end{matrix}  
\right) \; .
\ee
Here and in the following we suppress where possible the dependence on $\{\mu_j\}$ for clarity of notation.

\section{Periodic boundary conditions}
The evaluation of the partition function (\ref{partition}) depends drastically on the boundary conditions chosen. In this
section we shall consider periodic, or toroidal boundary conditions ($\alpha_{i,L+1}=\alpha_{i,1}$ and $\beta_{N+1,j}=\beta_{j,1}$), and review the basics of the quantum inverse scattering method (QISM) for computing the six-vertex partition function in this case.

From \eqref{partition} and \eqref{eq:mono1} it is a standard construction that the partition function for toroidal boundary conditions can be written as
\be
Z(\{\lambda_i\},\{\mu_j\}) = \Tr_{\mathbb{C}^{2\otimes L}} \left(\tau(\lambda_1)\cdots \tau(\lambda_{N}) \right),
\ee
where $\tau(\lambda)$ is the transfer matrix defined by
\be
\label{transfmatrix}
\tau(\lambda) = \tau(\lambda,\{\mu_j\}) = \Tr_{\mathcal{A}} \mathcal{T}(\lambda) = A(\lambda)+D(\lambda).
\ee
Due to the Yang-Baxter equation the transfer matrices $\tau(\lambda_i)$ all commute and are simultaneously diagonalisable. The partition function is therefore given by
\be
Z(\{\lambda_i\},\{\mu_j\}) = \sum_{\ell=1}^{2^L} \prod_{i=1}^N \Lambda_{\ell}(\lambda_i),
\ee
where $\Lambda_{\ell}$ is the $\ell$th eigenvalue of the transfer matrix $\tau$. 

\subsection{Bethe ansatz diagonalisation}

Let us first define the `pseudo-vacuum' states $\ket{\Psi_{\uparrow}}$ and $\ket{\Psi_{\downarrow}}$ by
\be
\label{states}
\ket{\Psi_\uparrow} = \bigotimes_{i=1}^{L} \ket{\uparrow} \qquad \qquad \mbox{and} \qquad \qquad \ket{\Psi_\downarrow} = \bigotimes_{i=1}^{L} \ket{\downarrow} \; .
\ee
An important ingredient in the QISM setup for diagonalising the transfer matrix $\tau(\lambda)$ is the action of the monodromy matrix  elements on the states $\ket{\Psi_{\uparrow}}$ and $\ket{\Psi_{\downarrow}}$. For $\ket{\Psi_{\uparrow}}$ these are given by
\begin{align}
\label{action}
A(\lambda) \ket{\Psi_{\uparrow}} &= \prod_{j=1}^L a(\lambda - \mu_j) \ket{\Psi_{\uparrow}} & B(\lambda) \ket{\Psi_{\uparrow}} & = * \nonumber \\
C(\lambda) \ket{\Psi_{\uparrow}} &= 0 &    D(\lambda) \ket{\Psi_{\uparrow}} &= \prod_{j=1}^L b(\lambda - \mu_j) \ket{\Psi_{\uparrow}}\; .
\end{align}
%
%
%
The relations (\ref{action}) 
follow from the definitions (\ref{mono}) and (\ref{monorep}) together with the triangularity of $\mathcal{L}_{\mathcal{A} j}$ on local states.

Due to the Yang-Baxter equation (\ref{YB}) one can readily show that the monodromy matrix (\ref{mono})
satisfies the following quadratic relation
\be 
\label{yba}
\mathcal{R}(\lambda - \nu) \left[\mathcal{T}(\lambda) \otimes \mathcal{T}(\nu)\right] = \left[\mathcal{T}(\nu) \otimes \mathcal{T}(\lambda)\right] \mathcal{R}(\lambda - \nu)
\ee
which is known commonly as the Yang-Baxter algebra. Here $\mathcal{R} = P \mathcal{L}$ where $P$ stands for the standard permutation matrix. Thus (\ref{yba}) yields commutation relations for the elements of the monodromy matrix (\ref{monorep}). Among the relations encoded in (\ref{yba}) we shall make use of the following ones,
\begin{align}
A(\lambda) B(\nu) &= \frac{a(\nu - \lambda)}{b(\nu - \lambda )} B(\nu) A(\lambda) - 
\frac{c(\nu - \lambda)}{b(\nu - \lambda )} B(\lambda) A(\nu), \nonumber \\
D(\lambda ) B(\nu) &= \frac{a(\lambda - \nu)}{b(\lambda - \nu)} B(\nu ) D(\lambda ) - 
\frac{c(\lambda - \nu)}{b(\lambda - \nu )} B(\lambda) D(\nu ), \label{alg}\\ 
B(\lambda)  B(\nu ) &=  B(\nu ) B(\lambda )   \; .\nonumber 
\end{align}
We are now in a position to state the main theorem of the algebraic Bethe ansatz, or quantum inverse scattering method.
\begin{theorem}
Let the numbers $\lambda_1,\ldots,\lambda_n$ be solutions of the equations
\be
\label{eq:bae}
\prod_{j=1}^L \frac{a(\lambda_i - \mu_j)}{b(\lambda_i - \mu_j) } = (-1)^{n-1}\prod_{k=1}^n \frac{a(\lambda_i - \lambda_k)}{a(\lambda_k - \lambda_i) }.
\ee
Then, 
\be
\label{eq:BAeigvec}
\ket{\lambda_1,\ldots,\lambda_n} = B(\lambda_1)\cdots B(\lambda_n) \ket{\Psi_{\uparrow}}
\ee
are eigenvectors of the transfer matrix $\tau(\lambda,\{\mu_j\})$
with eigenvalues $\Lambda(\lambda,\{\mu_j\})$ given by 
\begin{align}
\label{eq:BAeigval}
\Lambda(\lambda,\{\mu_j\}) &= \prod_{j=1}^L a(\lambda - \mu_j) \prod_{k=1}^n \frac{a(\lambda_k-\lambda)}{b(\lambda_k-\lambda)} + \prod_{j=1}^L b(\lambda - \mu_j) \prod_{k=1}^n \frac{a(\lambda-\lambda_k)}{b(\lambda-\lambda_k)} \; .
\end{align}
\end{theorem}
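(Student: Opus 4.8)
The plan is to run the classical algebraic Bethe ansatz (QISM) computation: apply $\tau(\lambda)=A(\lambda)+D(\lambda)$ to the vector $\ket{\lambda_1,\ldots,\lambda_n}=B(\lambda_1)\cdots B(\lambda_n)\ket{\Psi_\uparrow}$, commute $A(\lambda)$ and $D(\lambda)$ to the right through all the $B(\lambda_i)$ using the relations \eqref{alg}, and then let them act on the pseudo-vacuum via \eqref{action}.

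First I would handle $A(\lambda)$. Iterating the first relation of \eqref{alg} to move $A(\lambda)$ past $B(\lambda_1),\ldots,B(\lambda_n)$ produces $2^n$ terms, one for each choice of the diagonal summand $\tfrac{a(\nu-\lambda)}{b(\nu-\lambda)}B(\nu)A(\lambda)$ or the off-diagonal summand $-\tfrac{c(\nu-\lambda)}{b(\nu-\lambda)}B(\lambda)A(\nu)$ at each step. Choosing the diagonal summand every time yields the \emph{wanted} term $\prod_{k=1}^n\tfrac{a(\lambda_k-\lambda)}{b(\lambda_k-\lambda)}\,B(\lambda_1)\cdots B(\lambda_n)A(\lambda)\ket{\Psi_\uparrow}$, which by the first line of \eqref{action} equals $\prod_{j=1}^L a(\lambda-\mu_j)\prod_{k=1}^n\tfrac{a(\lambda_k-\lambda)}{b(\lambda_k-\lambda)}$ times $\ket{\lambda_1,\ldots,\lambda_n}$. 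Every other term is \emph{unwanted}: in it at least one $B(\lambda_i)$ has been replaced by $B(\lambda)$. Using the commutativity $B(\lambda_i)B(\lambda_j)=B(\lambda_j)B(\lambda_i)$ from \eqref{alg}, all unwanted terms can be organized into multiples of the vectors $B(\lambda)\,B(\lambda_1)\cdots\widehat{B(\lambda_i)}\cdots B(\lambda_n)\ket{\Psi_\uparrow}$, $i=1,\ldots,n$, and it suffices to compute the coefficient of the $i=1$ representative, the others following by the relabelling $\lambda_1\leftrightarrow\lambda_i$. That coefficient arises uniquely from taking the off-diagonal summand $-\tfrac{c(\lambda_1-\lambda)}{b(\lambda_1-\lambda)}B(\lambda)A(\lambda_1)$ in the first commutation and the diagonal summand in all subsequent ones, followed by \eqref{action} applied to $A(\lambda_1)\ket{\Psi_\uparrow}$; it equals
\[
-\,\frac{c(\lambda_1-\lambda)}{b(\lambda_1-\lambda)}\,\prod_{k=2}^n\frac{a(\lambda_k-\lambda_1)}{b(\lambda_k-\lambda_1)}\,\prod_{j=1}^L a(\lambda_1-\mu_j).
\]

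Repeating the computation verbatim for $D(\lambda)$ with the second relation of \eqref{alg} and the second line of \eqref{action} gives the wanted coefficient $\prod_{j=1}^L b(\lambda-\mu_j)\prod_{k=1}^n\tfrac{a(\lambda-\lambda_k)}{b(\lambda-\lambda_k)}$, whose sum with the $A$-contribution is precisely $\Lambda(\lambda,\{\mu_j\})$ of \eqref{eq:BAeigval}, together with the unwanted coefficient $-\tfrac{c(\lambda-\lambda_1)}{b(\lambda-\lambda_1)}\prod_{k=2}^n\tfrac{a(\lambda_1-\lambda_k)}{b(\lambda_1-\lambda_k)}\prod_{j=1}^L b(\lambda_1-\mu_j)$ of the same representative vector. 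Adding the two unwanted coefficients and using that $b(\lambda)=\sinh\lambda$ is odd while $c(\lambda)=\sinh\gamma$ is even — so $\tfrac{c(\lambda-\lambda_1)}{b(\lambda-\lambda_1)}=-\tfrac{c(\lambda_1-\lambda)}{b(\lambda_1-\lambda)}$ and $b(\lambda_k-\lambda_1)=-b(\lambda_1-\lambda_k)$ — the requirement that the total unwanted coefficient vanish rearranges into $\prod_{j=1}^L\tfrac{a(\lambda_1-\mu_j)}{b(\lambda_1-\mu_j)}=(-1)^{n-1}\prod_{k\ne 1}\tfrac{a(\lambda_1-\lambda_k)}{a(\lambda_k-\lambda_1)}$, which is exactly \eqref{eq:bae} for $i=1$ (the $k=i$ factor being trivially $1$). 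By the symmetry noted above, the remaining unwanted coefficients vanish under \eqref{eq:bae} for the corresponding $i$, and hence $\tau(\lambda)\ket{\lambda_1,\ldots,\lambda_n}=\Lambda(\lambda,\{\mu_j\})\ket{\lambda_1,\ldots,\lambda_n}$.

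The only genuine obstacle is the combinatorial bookkeeping: keeping track of the $2^n$ terms produced by each of $A(\lambda)$ and $D(\lambda)$, and justifying the reduction to the single representative unwanted term via the commutativity of the $B$'s. Once that is set up, matching the cancellation condition to \eqref{eq:bae} is a one-line manipulation with the parities of $a$, $b$, $c$, and the eigenvalue \eqref{eq:BAeigval} simply drops out of the two wanted terms.
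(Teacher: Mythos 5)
Your proposal is correct and follows essentially the same route as the paper, which proves the theorem by exactly this algebraic Bethe ansatz argument: the wanted terms from the first summands of \eqref{alg} together with \eqref{action} produce the eigenvalue \eqref{eq:BAeigval}, while the Bethe equations \eqref{eq:bae} cancel the unwanted terms. Your write-up simply supplies the standard bookkeeping (reduction to a single representative unwanted term via the commutativity of the $B$'s and the parity of $b$ and $c$) that the paper leaves implicit.
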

\begin{proof}
The theorem follows from the definition \eqref{transfmatrix} of the transfer matrix, the action \eqref{action} on the pseudo-vacuum, and relations \eqref{alg}. In particular, the expression for the eigenvalue originates from \eqref{action} and the first terms on the right hand side of \eqref{alg}. Equations \eqref{eq:bae} imply that unwanted terms arising from the other terms on the right hand side of \eqref{alg} cancel.
\end{proof}

An important problem in the theory of solvable lattice models is to provide manageable expressions for correlation functions. These are of the form 
\be
\bra{\nu_1,\ldots,\nu_m} \mathcal{O} \ket{\lambda_1,\ldots,\lambda_n}, 
\ee
where $\mathcal{O}$ is some operator. The simplest case concerns an expression for the norm of Bethe states,
\be
\label{eq:scprod}
N(\{\nu_i\},\{\lambda_j \}) = \bra{\nu_1,\ldots,\nu_n} \lambda_1,\ldots,\lambda_n\rangle.
\ee
Below we shall see that \eqref{eq:scprod} is equal to a special case of the partition function \eqref{partition2} with fixed boundary conditions. In the next section we show that such partition functions can be realised as multiple contour integrals over a factorised kernel. 

\section{Fixed boundary conditions}
In this section we review some known results for the partition function \eqref{partition2} in the case of fixed boundary conditions. In particular we recall Baxter's formula for such partition functions for general graphs and boundary conditions, expressed in terms of a sum over the symmetric group \cite{Baxter87}. \\ Let us first look at the well known case of domain wall boundary conditions.

\subsection{Domain wall boundary conditions}
\label{sec:altern}

For the case of domain wall boundaries with $N=L$, the edges on the vertical and horizontal borders in Figure \ref{fig:lattice} assume a particular configuration. Naturally one needs to be careful when choosing the boundaries since an inappropriate choice can render a trivial (null) partition function or force the system to freeze in a particular configuration. Domain wall boundaries for the six vertex model were first introduced by Korepin \cite{Korepin82}, and in our notation correspond to the choice $\alpha_{i,1} = \rightarrow$, $\beta_{L+1,j} = \uparrow$ and $\alpha_{i,L+1} = \leftarrow$, $\beta_{1,j} = \downarrow$. This special boundary condition is illustrated in \figref{fig:latticedw}.
\begin{figure}[h]\centering
\begin{tikzpicture}[scale=0.5,line width=0.4mm]
\draw[>-<] (-2,0) -- (10,0);
\draw[->] (0,-2) -- (0,2);
\draw[->] (2,-2) -- (2,2);
\draw[->] (4,-2) -- (4,2);
\draw[->] (6,-2) -- (6,2);
\draw[->] (8,-2) -- (8,2);
\begin{scope}[yshift=-2.2cm]
\draw[>-<] (-2,0) -- (10,0);
\draw (0,-2) -- (0,2);
\draw (2,-2) -- (2,2);
\draw (4,-2) -- (4,2);
\draw (6,-2) -- (6,2);
\draw (8,-2) -- (8,2);
\end{scope}
\begin{scope}[yshift=-4.4cm]
\draw[>-<]  (-2,0) -- (10,0);
\draw (0,-2) -- (0,2);
\draw (2,-2) -- (2,2);
\draw (4,-2) -- (4,2);
\draw (6,-2) -- (6,2);
\draw (8,-2) -- (8,2);
\end{scope}
\begin{scope}[yshift=-6.6cm]
\draw[>-<]  (-2,0) -- (10,0);
\draw (0,-2) -- (0,2);
\draw (2,-2) -- (2,2);
\draw (4,-2) -- (4,2);
\draw (6,-2) -- (6,2);
\draw (8,-2) -- (8,2);
\end{scope}
\begin{scope}[yshift=-8.8cm]
\draw[>-<]  (-2,0) -- (10,0) ;
\draw[<-] (0,-2) -- (0,2);
\draw[<-] (2,-2) -- (2,2);
\draw[<-] (4,-2) -- (4,2); 
\draw[<-] (6,-2) -- (6,2); 
\draw[<-] (8,-2) -- (8,2); 
\end{scope}
\end{tikzpicture}
\caption{Domain wall boundary condition.}
\label{fig:latticedw}
\end{figure}
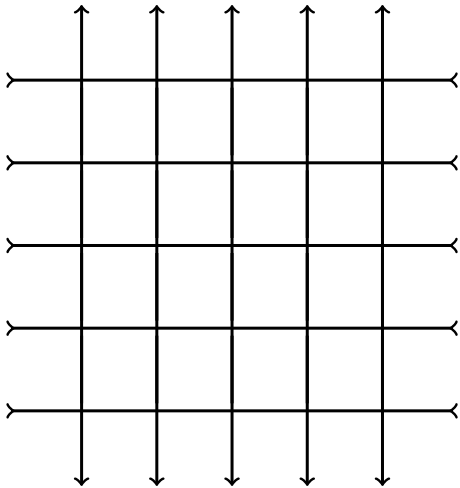

Using the prescriptions given in \secref{sec:vertexmodel}, we can see that the top and bottom boundary represent the states $\ket{\Psi_\uparrow}$ and $\ket{\Psi_\downarrow}$ defined in \eqref{states}. The left and right boundaries in Figure \ref{fig:latticedw} select the elements of the monodromy matrix forming the operator $\mathcal{O} = \prod_{i=1}^{L} B (\lambda_i , \{ \mu_j \})$. Thus our partition function can be written as
\be
\label{pf1}
Z(\{\lambda_i\}, \{\mu_j\}) = \bra{\Psi_{\downarrow}} \prod_{i=1}^{L} B (\lambda_i , \{ \mu_j \})  \ket{\Psi_{\uparrow}} \; ,
\ee
where we emphasise that the rapidities $\{\lambda_i\}$ may be chosen arbitrarily and do not have to satisfy the Bethe ansatz equations.
It is well known that \eqref{pf1} can be written as the Izergin-Korepin determinant \cite{Izergin87, Korepin82} as well as sums
over the permutation group \cite{Khoro_Paku05, Galleas11}.
 
\subsection{Scalar product}

The scalar product \eqref{eq:scprod} can be written in a similar fashion as the domain boundary partition function. From the definition of the Bethe eigenstates in \eqref{eq:BAeigvec} it follows immediately that the scalar product \eqref{eq:scprod} can be written as
\be
\label{scprod1}
N(\{\nu_i\},\{\lambda_j\})= \bra{\Psi_{\uparrow}} \prod_{i=1}^{n} C (\nu_i , \{ \mu_k \}) \prod_{i=1}^{n} B (\lambda_j , \{ \mu_k \})  \ket{\Psi_{\uparrow}} \; ,
\ee
where the sets of numbers $\{\nu_i\}$ and $\{\lambda_j\}$ each satisfy the Bethe equations \eqref{eq:bae}. Graphically \eqref{scprod1} is depicted in \figref{fig:latticescprod}.

\begin{figure}[h]\centering
\begin{tikzpicture}[scale=0.5,line width=0.4mm]
\draw[>-<] (-2,0) -- (12,0) node[right]{$\lambda_n$};
\draw[->] (0,-2) -- (0,2) node[above]{$\mu_1$};
\draw[->] (2,-2) -- (2,2);
\draw (4,-2) -- (4,2) ;
\draw (5,2) node[above]{\dots};
\draw (6,-2) -- (6,2) ;
\draw[->] (8,-2) -- (8,2);
\draw[->] (10,-2) -- (10,2) node[above]{$\mu_L$};
\begin{scope}[yshift=-2.2cm]
\draw (-2,0) node[left]{$\vdots$} -- (12,0) node[right]{$\vdots$};
\draw (0,-2) -- (0,2);
\draw (2,-2) -- (2,2);
\draw (4,-2) -- (4,2);
\draw (6,-2) -- (6,2);
\draw (8,-2) -- (8,2);
\draw (10,-2) -- (10,2);
\end{scope}
\begin{scope}[yshift=-4.4cm]
\draw[>-<] (-2,0) -- (12,0) node[right]{$\lambda_1$};
\draw (0,-2) -- (0,2);
\draw (2,-2) -- (2,2);
\draw (4,-2) -- (4,2);
\draw (6,-2) -- (6,2);
\draw (8,-2) -- (8,2);
\draw (10,-2) -- (10,2);
\end{scope}
\begin{scope}[yshift=-6.6cm]
\draw[<->] (-2,0) -- (12,0) node[right]{$\nu_1$};
\draw (0,-2) -- (0,2);
\draw (2,-2) -- (2,2);
\draw (4,-2) -- (4,2);
\draw (6,-2) -- (6,2);
\draw (8,-2) -- (8,2);
\draw (10,-2) -- (10,2);
\end{scope}
\begin{scope}[yshift=-8.8cm]
\draw (-2,0) node[left]{$\vdots$} -- (12,0) node[right]{$\vdots$};
\draw (0,-2) -- (0,2);
\draw (2,-2) -- (2,2);
\draw (4,-2) -- (4,2);
\draw (6,-2) -- (6,2);
\draw (8,-2) -- (8,2);
\draw (10,-2) -- (10,2);
\end{scope}
\begin{scope}[yshift=-11cm]
\draw[<->] (-2,0) -- (12,0) node[right]{$\nu_n$} ;
\draw[>-] (0,-2) -- (0,2);
\draw[>-] (2,-2) -- (2,2);
\draw (4,-2) -- (4,2); 
\draw (5,-2) node[below]{\dots};
\draw (6,-2) -- (6,2); 
\draw[>-] (8,-2) -- (8,2); 
\draw[>-] (10,-2) -- (10,2);
\end{scope}
\end{tikzpicture}
\caption{Scalar product boundary condition.}
\label{fig:latticescprod}
\end{figure}
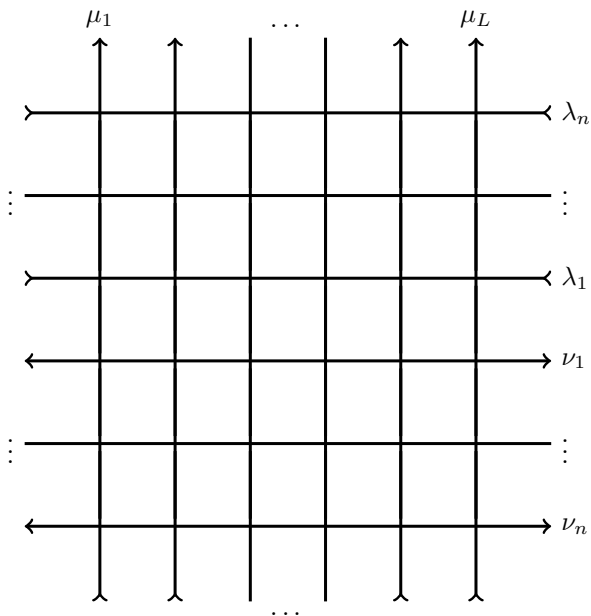
\subsection{Perimeter Bethe ansatz}
\label{sec:perimeter}

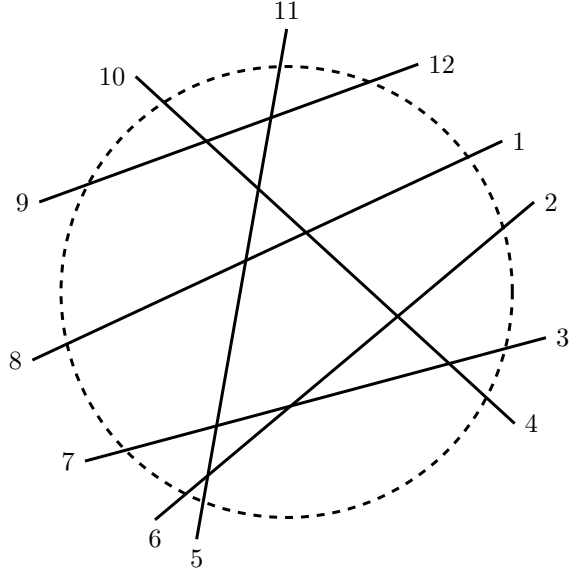
\begin{figure}[h]\centering
\begin{tikzpicture}[scale=0.5,line width=0.4mm]
\draw[dashed] (0,0) circle(6);
\draw (160:7) node[left]{9} -- (60:7) node[right]{12};
\draw (15:-7) node[left]{8} -- (35:7) node[right]{1};
\draw (40:-7) node[left]{7} -- (-10:7) node[right]{3};
\draw (60:-7) node[below]{6} -- (20:7) node[right]{2};
\draw (-30:7) node[right]{4} -- (125:7) node[left]{10};
\draw (90:7) node[above]{11} -- (-110:7) node[below]{5};
\end{tikzpicture}
\caption{Example of a general planar graph inside a domain $\mathcal{D}$ without multiple crossings at one point for $m=6$.}
\label{fig:Zinv}
\end{figure}

In this section, we review the result of Baxter \cite{Baxter87} for a general $Z$-invariant six-vertex model, adjusting some of his notation to serve our purposes \footnote{in particular, note that Baxter labels the endpoints in an anti-clockwise direction, and writes his formula with respect to the {\it left} rapidities}.  The general model is considered on a simply connected convex planar domain $\mathcal{D}$, with $m$ straight lines within it, starting and ending at the boundary \cite{Baxter78,Baxter87}. No three lines are allowed to intersect at a common point, (see Figure~\ref{fig:Zinv} for an example).  We fix some point on the boundary, and label the ends of the lines as $1,2,\dots,2m$, in a clockwise direction. The line with endpoints $i$ and $j$ is referred to as `the line $(i,j)$', where $1\leq i<j\leq2m$. We associate a `rapidity' $v_i$ with each endpoint $i$, and for each line $(i,j)$ impose the constraint
\be
v_j=v_i-\gamma.
\ee
Baxter uses the terminology `right' rapidity and `left' rapidity of a line $(i,j)$ for $v_i$ and $v_j$ respectively.  We also define the set of all rapidities $V=\{v_1,\dots,v_{2m} \}$, as well as the set $U$ of all right rapidities. 

A six-vertex model is constructed by placing arrows on the edges of the lattice so that each vertex (intersection) has two in-pointing arrows in and two out-pointing arrows (the ice-rule). A vertex at the crossing of lines $(i,j)$ and $(k,l)$ with $i<k<j<l$, as in Figure~\ref{fig:vertexinZ}, is given a weight $w(i,j|k,l)$ defined below. 
\begin{figure}[h]\centering
\begin{tikzpicture}[scale=0.2,line width=0.4mm]
\draw[dashed] (0,0) circle(6);
\draw[color=lightgray] (160:7) -- (60:7);
\draw[color=lightgray] (40:-7) -- (-10:7);
\draw[color=lightgray] (60:-7)  -- (20:7);
\draw[color=lightgray] (-30:7)  -- (125:7);
\draw (15:-7) node[left]{$j$} -- (35:7) node[right]{$i$};
\draw (90:7) node[above]{$l$}  -- (-110:7) node[below]{$k$} ;
\end{tikzpicture}
\caption{A vertex at the crossing of lines $(i,j)$ and $(k,l)$ with $i<k<j<l$ has weight $w(i,j|k,l)$.}
\label{fig:vertexinZ}
\end{figure}
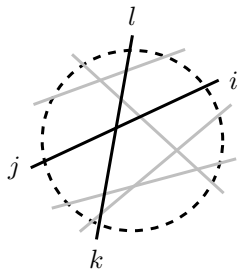
The six possible arrangements as viewed from the boundary points $k$ and $j$ are shown in Figure~\ref{fig:6vertexZinv}. 
\begin{figure}[h]\centering
\begin{tikzpicture}[scale=0.5,line width=0.4mm]
\draw[>->] (15:-1) -- (35:1);
\draw[<-<] (90:1) -- (-110:1) node[below]{$w_1$};
\draw[xshift=4cm,<-<] (15:-1) -- (35:1);
\draw[xshift=4cm,>->] (90:1) -- (-110:1) node[below]{$w_2$};
\draw[xshift=8cm,>->] (15:-1) -- (35:1);
\draw[xshift=8cm,>->] (90:1) -- (-110:1) node[below]{$w_3$};
\draw[xshift=12cm,<-<] (15:-1) -- (35:1);
\draw[xshift=12cm,<-<] (90:1) -- (-110:1) node[below]{$w_4$};
\draw[xshift=16cm,>-<] (15:-1) -- (35:1);
\draw[xshift=16cm,<->] (90:1) -- (-110:1) node[below]{$w_5$};
\draw[xshift=20cm,<->] (15:-1) -- (35:1);
\draw[xshift=20cm,>-<] (90:1) -- (-110:1) node[below]{$w_6$};
\end{tikzpicture}
\caption{Six vertex configurations on vertices inside $\mathcal{D}$.}
\label{fig:6vertexZinv}
\end{figure}
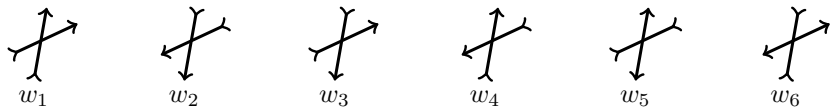
Boltzmann weights are assigned to the six configurations as follows (in the notation of this paper):
\begin{align}
w_1=w_2 &=1,\nn \\
w_3=w_4 &=\frac{b(v_l-v_k)}{a(v_l-v_k)} = \frac{\sinh(v_l-v_k)}{\sinh(v_l-v_k+\gamma)},\\
w_5=w_6 &=\frac{c(v_l-v_k)}{a(v_l-v_k)}=\frac{\sinh(\gamma)}{\sinh(v_l-v_k+\gamma)}.\nn
\end{align}

We also fix boundary conditions, with $m$ arrows pointing into $\mathcal{D}$ and $m$ arrows pointing out.  We specify these boundary arrows with the set of endpoint locations where an out arrow occurs, $X=\{ x_1,\dots,x_m \}$.  The partition function of the model is then a function of $X$ and $V$.  We now come to the main result (for our purposes) of \cite{Baxter87}:
\begin{theorem}[Baxter]
\label{th:baxter}
Let $V=\{v_1,\ldots,v_{2m}\}$ be the set of all rapidities, and let $U$ be the set of right rapidities. The partition function $Z(X|V)$ for the Z invariant six-vertex model is given by
\begin{align}
Z(X|V)&:= \sum_{\text{cfgs.}} \prod_{(i,j|k,l)} w(i,j|k,l) \nn \\
&=   C^{-1} \sum_{P} \prod_{1\leq i<j\leq m} \frac{a(u_j - u_i)}{b(u_j - u_i)}   \prod_{i=1}^m \phi_{x_i}(u_i) 
\end{align}
where
\begin{align}
\label{eq:phidef}
\phi_x(u)&= \prod_{j=1}^{x-1} a(v_j-u) \prod_{j=x+1}^{2m} b(v_j-u),
\end{align}
and the pre-factor is defined by
\be
C=(-1)^{n_\uparrow}\sinh(\gamma)^{-m}\prod_{1\leq i,j\leq m} a(u_i-u_j)^2,
\ee
where $n_\uparrow$ is the number of out-point arrows with coordinate $x\leq m$ and the variables $u_i$ refer to the elements of the set of right rapidities $U$.
\end{theorem}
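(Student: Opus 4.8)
The plan is to prove Theorem~\ref{th:baxter} in two stages: a geometric reduction turning the $Z$-invariant model on the domain $\mathcal{D}$ into a standard inhomogeneous six-vertex lattice, followed by an algebraic evaluation of the resulting matrix element via the Yang--Baxter algebra \eqref{yba}. For the first stage I would invoke $Z$-invariance: the Yang--Baxter relation \eqref{YB} (equivalently, the star--triangle relation) guarantees that the partition function is left unchanged when one of the $m$ lines is slid across the crossing of two others, provided no three lines ever become concurrent and the cyclic order of the $2m$ endpoints on $\partial\mathcal{D}$ is preserved. Iterating such moves deforms the diagram of Figure~\ref{fig:Zinv} into a canonical arrangement in which the $m$ lines play the role of the rows of an inhomogeneous lattice, the $i$th row carrying the right rapidity $u_i\in U$ (its left end then carrying $u_i-\gamma$, forced by the constraint $v_j=v_i-\gamma$), with the endpoint rapidities $v_j$ appearing as inhomogeneities. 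Because of the normalisation $w_1=w_2=1$, all frozen vertices of this canonical lattice contribute $1$, so the boundary data $X=\{x_1,\dots,x_m\}$ survives only as the choice, row by row, of where the out-arrow leaves that row; one thereby rewrites $Z(X|V)$ as a pseudo-vacuum--to--pseudo-vacuum matrix element of a product of monodromy entries of exactly the type appearing in \eqref{pf1} and \eqref{scprod1} --- schematically $m$ off-diagonal operators arranged according to $X$, sandwiched between states of the form \eqref{states}.

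\textbf{Algebraic evaluation.} With this reduction in hand I would evaluate the matrix element by the same algebraic Bethe ansatz manipulation that underlies the proof of the Bethe ansatz theorem above: repeatedly apply the commutation relations \eqref{alg} to move the off-diagonal operators past the diagonal ones, then use the pseudo-vacuum action \eqref{action}. Each commutation produces one ``wanted'' term, carrying the diagonal coefficient $a/b$, and ``unwanted'' terms carrying $c/b$ with a swapped argument; iterating and collecting all the terms produces a sum over the symmetric group $S_m$ acting on the right rapidities $\{u_i\}$, whose generic summand is $\prod_{1\le i<j\le m}\frac{a(u_j-u_i)}{b(u_j-u_i)}$ times a product of pseudo-vacuum eigenvalues evaluated at the (permuted) $u_i$. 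The point is that the pseudo-vacuum eigenvalue contributed by a single row is, after the identification of parameters, precisely $\phi_{x_i}(u_i)=\prod_{j=1}^{x_i-1}a(v_j-u_i)\prod_{j=x_i+1}^{2m}b(v_j-u_i)$ of \eqref{eq:phidef}, the index $x_i$ recording where the out-arrow exits. Summing over $P\in S_m$ then yields the stated double product, and bookkeeping the overall sign $(-1)^{n_\uparrow}$ together with the multiplicative constants $\sinh(\gamma)^{-m}\prod_{1\le i,j\le m}a(u_i-u_j)^2$ --- which come from the $b$-denominators in \eqref{alg}, from reinstating the denominators of $w_3,\dots,w_6$ that were normalised away, and from reordering the $c$-coefficients --- reproduces the prefactor $C$ and completes the proof.

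\textbf{Main obstacle, and an alternative.} The delicate part is the first stage: verifying that the $Z$-invariance moves genuinely carry an arbitrary admissible planar diagram to the canonical lattice without changing $Z$, and --- harder still --- tracking how the set $X$ is transported under those moves, so that each $x_i$ ends up attached to the correct row $u_i$ and hence to the correct truncation point in $\phi_{x_i}$. A proof that bypasses the geometry entirely is an induction on $m$: one checks that both sides of the claimed identity (i) have the same Laurent-polynomial degree in each $e^{u_i}$, (ii) obey the same $m\to m-1$ ``freezing'' recursion obtained by specialising some difference $v_j-u_i$ so that a Boltzmann weight vanishes, (iii) are symmetric under permutations of the appropriate rapidities, and (iv) agree for $m=1$; a Korepin-type uniqueness lemma then forces equality. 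On either route the last, routine-but-error-prone step is pinning down the sign and the constant in $C$, which I would leave to the end.
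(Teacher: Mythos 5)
First, note that the paper does not actually prove Theorem~\ref{th:baxter}: it is quoted verbatim (with notation adjusted) from Baxter's perimeter Bethe ansatz paper \cite{Baxter87}, so there is no in-paper argument to compare yours against; the question is whether your sketch would stand on its own, and as written it would not. The first stage has a concrete obstruction: Yang--Baxter / $Z$-invariance moves \eqref{YB} let you slide a line past a crossing of two others, but they never change \emph{which} pairs of lines cross --- that is fixed by the interlacement of the endpoints on $\partial\mathcal{D}$. A general admissible configuration (e.g.\ three mutually crossing chords, as in Figure~\ref{fig:Zinv}) cannot be split into two non-crossing families of ``rows'' and ``columns'', so it cannot be deformed into a standard inhomogeneous rectangular lattice, and there is no common quantum space on which the monodromy-matrix machinery \eqref{yba}, \eqref{action}, \eqref{alg} acts. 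The reduction to a pseudo-vacuum matrix element of the form \eqref{pf1} or \eqref{scprod1} works only for the special geometries treated later in the paper, not for the general $Z$-invariant model that the theorem covers.

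The second stage also asserts rather than proves the key point. Even in the honest rectangular case, evaluating an off-shell matrix element of products of $B$'s and $C$'s by repeated use of \eqref{alg} and \eqref{action} produces Korepin-type double sums over \emph{partitions} of the rapidities with non-factorised coefficients \cite{Korepin82}, not a single sum over $S_m$ with the coefficient $\prod_{i<j}a(u_j-u_i)/b(u_j-u_i)$ and one-row factors $\phi_{x_i}(u_i)$; moreover $\phi_x(u)$ in \eqref{eq:phidef} involves all $2m$ boundary rapidities (left and right), so it is a perimeter plane-wave factor, not a pseudo-vacuum eigenvalue of a single row. The collapse of those sums into the single symmetric-group sum with known ``wave numbers'' is precisely the nontrivial content of Baxter's theorem, so invoking it as the outcome of routine bookkeeping is circular. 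Your alternative route --- induction on $m$ via degree counting, symmetry, freezing recursions and a Korepin-type uniqueness lemma --- is closer to a viable proof (and to how such identities are usually verified), but as stated it is only a plan: the recursion must track the dependence on the arrow positions $X$ as well as on $V$, none of the conditions (i)--(iv) is verified, and the sign and constant in $C$ are exactly where such arguments typically fail; until those steps are carried out the theorem remains unproved by your argument, and the honest course here is to cite \cite{Baxter87} as the paper does.
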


\section{Integral expression}
In this section we will derive a multiple integral expression for the partition function \eqref{partition2} in the case of fixed boundary conditions. Theorem~\ref{th:baxter} provides an explicit expression derived by Baxter \cite{Baxter87} for such partition functions for general graphs and boundary conditions. This expression is in terms of a large sum over the symmetric group and in this form not immediately useful. We hope our integral expressions are more manageable and may lead to further progress for correlation functions as well as higher rank cases. 

We wish to write the sum over permutations as an integral, summing over residues.  We introduce auxiliary integration variables $w_i$, $i=1,\dots,m$. This leads to the following result.
\begin{theorem}
Let $V=\{v_1,\ldots,v_{2m}\}$ be the set of all rapidities, and let $U$ be the set of left rapidities.
\begin{multline}
Z(X|V) = \frac{C^{-1} }{(2\pi\ii)^m}\oint \dots \oint \frac{\prod_{i=1}^m\prod_{j=i+1}^{m}  a(w_j-w_i)b(w_i-w_j)} {\prod_{i,j=1}^{m} b(w_i-u_j) } \prod_{i=1}^{m} \phi_{x_i}(w_i)\  \dd w_1 \dots \dd w_{m}.
\label{eq:Baxter_int}
\end{multline}
The contours of integration are around the poles at $w_i=u_j$, $i,j=1,\ldots,m$ and the prefactor is again given by
\be
C=(-1)^{n_\uparrow}\sinh(\gamma)^{-m}\prod_{1\leq i,j\leq m} a(u_i-u_j)^2,
\ee
and $\phi_x(w)$ and $n_\uparrow$ are as in Theorem~\ref{th:baxter}.
\end{theorem}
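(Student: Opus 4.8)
The plan is to evaluate the right-hand side of \eqref{eq:Baxter_int} by residues and to recover, term by term, the sum over the symmetric group in Baxter's formula (Theorem~\ref{th:baxter}). For generic rapidities the integrand in \eqref{eq:Baxter_int} is, as a function of each $w_i$, meromorphic in $w_i$ with its only singularities inside the prescribed contours being the simple poles at $w_i=u_j$, $j=1,\dots,m$, produced by the denominator factors $b(w_i-u_j)=\sinh(w_i-u_j)$: the numerator of \eqref{eq:Baxter_int} and each $\phi_{x_i}(w_i)$ are entire in $w_i$, and this structure is preserved after each successive residue is taken (taking a residue in $w_{i'}$ merely substitutes a constant for $w_{i'}$ and at most introduces a further $\sinh$-factor in the numerator). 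Hence $\tfrac{1}{(2\pi\ii)^m}\oint\cdots\oint$ equals the sum of the iterated residues, taken one variable at a time, over all assignments $w_1\mapsto u_{j_1},\dots,w_m\mapsto u_{j_m}$ with $j_1,\dots,j_m\in\{1,\dots,m\}$.

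The combinatorial heart of the argument is that only the assignments for which $(j_1,\dots,j_m)$ is a permutation contribute. Suppose residues have already been taken at $w_1=u_{j_1},\dots,w_{k-1}=u_{j_{k-1}}$ with the $j_i$ pairwise distinct; the numerator then carries the factor $\prod_{i<k}b(w_i-w_k)\big|_{w_i=u_{j_i}}=\prod_{i<k}\sinh(u_{j_i}-w_k)$, which has a simple zero at $w_k=u_{j_i}$ for each $i<k$, whereas the denominator has only a simple pole there; hence $\operatorname{Res}_{w_k=u_{j_i}}=0$ whenever $j_k$ repeats an earlier index, and the surviving terms are labelled by $P\in S_m$ via $w_i\mapsto u_{P(i)}$. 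For such a $P$, each of the $m$ pole factors $b(w_i-u_{P(i)})$ contributes $1/b'(0)=1/\cosh 0=1$; the remaining denominator factors, since $i\mapsto P(i)$ is a bijection, collapse to the $P$-independent product $\prod_{k\ne l}b(u_k-u_l)$; the numerator evaluates to $\prod_{i<j}a(u_{P(j)}-u_{P(i)})\,b(u_{P(i)}-u_{P(j)})$; and the remaining factors give $\prod_i\phi_{x_i}(u_{P(i)})$. The elementary identity $b(x)b(-x)=-\sinh^2 x$ yields $\prod_{i<j}b(u_{P(i)}-u_{P(j)})\cdot\prod_{i<j}b(u_{P(j)}-u_{P(i)})=\prod_{k\ne l}b(u_k-u_l)$ (both sides equal $(-1)^{\binom m2}\prod_{k<l}b(u_k-u_l)^2$), so the residue coefficient of $P$ simplifies to exactly $\prod_{1\le i<j\le m}\tfrac{a(u_{P(j)}-u_{P(i)})}{b(u_{P(j)}-u_{P(i)})}$. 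Summing over $P\in S_m$ and restoring the prefactor $C^{-1}$ reproduces the right-hand side of the formula in Theorem~\ref{th:baxter}, and therefore equals $Z(X|V)$.

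The step I expect to be the main obstacle is the bookkeeping in the second paragraph: one must verify that it is precisely the numerator factors $b(w_i-w_j)$ (and not an $a$-factor or a $\phi$-factor) that produce the zeros killing the non-permutation residues, that each such zero is simple against a simple pole, and that the product over pairs collapses with exactly the sign needed so that Baxter's summand is matched with no residual $(-1)^{\binom m2}$ or $\operatorname{sgn}(P)$. Two minor points should also be recorded: the identity is first proved for generic $V$, where the $u_j$ are distinct so the poles $w_i=u_j$ are simple and well separated, and then extended to all admissible $V$ by analytic continuation; and the residue evaluation naturally returns $\phi_{x_i}$ at the \emph{left} rapidities, so the comparison is with Baxter's formula written with respect to the left rapidities (his original convention, cf.\ the footnote in Section~\ref{sec:perimeter}), the prefactor $C$ and the ratios $a(u_j-u_i)/b(u_j-u_i)$ being insensitive to the global shift $u_i\mapsto u_i-\gamma$ relating right and left rapidities since they involve the $u_i$ only through differences.
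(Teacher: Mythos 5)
Your argument is correct and follows essentially the same route as the paper: the paper's proof consists precisely of the observation that the factors $b(w_i-w_j)$ force the nonzero residues to sit at distinct poles, i.e.\ at permutations of the $u_j$, and you have simply supplied the residue bookkeeping (pole factor giving $1/b'(0)=1$, collapse of the denominator to $\prod_{k\neq l}b(u_k-u_l)$, and the sign-free cancellation against $\prod_{i<j}b(u_{P(i)}-u_{P(j)})$) that the paper leaves implicit when matching against Baxter's sum. Your remark on the left/right rapidity convention is a reasonable and careful reading of a notational inconsistency in the paper itself, not a deviation in method.
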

\begin{proof}
The factor $\prod_{j>i} b(w_i-w_j)$ in the denominator implies that the only nonzero contributions arise from residues at different poles, i.e. if $\{w_1,\ldots,w_m\}$ is a permutation of $\{u_1,\ldots,u_m\}$. 
\end{proof}

\subsection{Domain wall boundary conditions}
\label{sec:integralformDWBC}

We now consider the result of Baxter \cite{Baxter87}, specialised to a $L$ by $L$ square lattice with domain wall boundary conditions. Using the notation $\bar{\mu}_j=\mu_{L-j+1}$, we have $4L$ rapidities though only $2L$ are independent. The full set of rapidities are given by
\be
V=\{\mu_1,\dots,\mu_L, \bar{\lambda}_1,\dots,\bar{\lambda}_L,\bar{\mu}_1-\gamma,\dots,\bar{\mu}_L-\gamma,\lambda_1-\gamma,\dots,\lambda_L-\gamma  \}
\ee
while the set of left rapidities is
\bea
U=\{ u_1,\dots,u_{2L} \}=\{\mu_1,\dots,\mu_L ,\bar{\lambda}_1,\dots,\bar{\lambda}_L\}.
\eea
We take boundary conditions of out arrows being on the vertical lines, and in arrows on the horizontal lines
according to \figref{fig:latticedw}.  Thus the positions of the 
out arrows are
\bea
X=\{ 1,\dots, L, 2L+1, \dots, 3L  \}.
\eea
With these definitions we have that the domain wall boundary partition function \eqref{pf1} is equal to
\be
Z(\{\lambda_i\},\{\mu_j\}) = \prod_{i,j=1}^L a(\lambda_i-\mu_j) Z(X|V).
\ee

The domain wall boundaries specialisation of \eqref{eq:Baxter_int} implies that the only nonzero contributions arise from residues at 
$w_i=\mu_i$ for $i=1,\ldots,L$, and if $\{w_{L+1},\ldots,w_{2L}\}$ is a permutation of $\{\lambda_{1},\ldots,\lambda_{L}\}$.  With this 
specialisation, and the re-labelling of the integration variables $w_{L+i} \rightarrow w_i$, the formula becomes
\begin{multline}\label{eq:int}
Z(\{\lambda_i\},\{\mu_j\}) = \left(\frac{\sinh\gamma}{2\pi \ii}\right)^L  \oint \dots \oint \frac{\prod_{i=1}^L\prod_{j=i+1}^{L} a({w}_j-{w}_i)b({w}_j-{w}_i)} {\prod_{i,j=1}^L b({w}_i-\lambda_j)}\\
\times \prod_{i=1}^L \prod_{j=1}^{i-1}  b(\bar{\mu}_j-{w}_i)\prod_{j=i+1}^{L}  a({w}_i-\bar{\mu}_j)\  \dd{w}_1 \dots \dd {w}_L
\end{multline}
where the integrals are around the poles at ${w}_i=\lambda_j$.

\subsubsection{Homogeneous limit}
\label{sec:homogeneous}
The integral formula \eqref{eq:int}, enables us to readily calculate the homogeneous limit of the partition function. With the changes of variables
\bea\label{eq:cofv}
w=\frac{1}{2}\log \bigg( \frac{1-t x}{t- x} \bigg),\quad \lambda = \frac{1}{2}\log \bigg(\frac{1-t z}{t- z}\bigg),\quad \mu = \frac{1}{2}\log \bigg(\frac{1-t y}{t- y}\bigg),\quad t=\e^{\gamma},
\eea
we obtain
\begin{multline}
Z= \frac{c^{L^2}}{(2\pi \ii)^L} \oint \dots \oint \frac{\prod_{i=1}^L\prod_{j=i+1}^{L} (x_i-x_j)(1+\kappa x_j+x_ix_j)} {\prod_{i,j=1}^L (z_j-x_i)} \\
\times  \prod_{i=1}^L\prod_{j=1}^L \frac{(y_j-z_i)(1+\kappa z_i+ y_jz_i)}{(1+\kappa y_j +y_j^2)^{1/2}(1+\kappa z_i +z_i^2)^{1/2}} \\
\times \prod_{i=1}^L \bigg( \prod_{j=1}^i \frac{1}{(1+\kappa x_i +x_i y_j)}  \prod_{j=i}^L \frac{1}{(y_j-x_i)} \bigg)\ \dd x_1 \dots \dd x_L,
\end{multline}
where $\kappa=-(t+1/t)$.  

We now set $z_i=z$ and $y_i=y$ for all $i$ ({\it i.e. } $\gl_i=\gl$, $\mu_i=\mu$),
 \begin{multline}
Z= \frac{c^{L^2}}{(2\pi \ii)^L} \oint \dots \oint \frac{\prod_{i=1}^L \prod_{j=i+1}^{L} (x_i-x_j)(1+\kappa x_j+x_ix_j)} {\prod_{i=1}^L (z-x_i)^L} \\
 \times \frac{(y-z)^{L^2}(1+\kappa z+ y z)^{L^2}}{(1+\kappa y +y^2)^{L/2}(1+\kappa z +z^2)^{L/2}}\prod_{i=1}^L  \frac{1}{(z_j-x_i)} \\
\times \prod_{j=1}^i \frac{1}{(1+\kappa x_i +x_i y)^L}  \prod_{j=i}^L \frac{1}{(y-x_i)^{L+1-i}}\ \dd x_1 \dots \dd x_L.
\end{multline}
Also the limit $a=b=c=1$, where the partition function counts the number of alternating sign matrices \cite{Kuperberg96}, is obtained as follows.  We first set $a(\lambda,\mu)=b(\lambda,\mu)$, {\it i.e.} $\lambda-\mu=-\gamma/2+\ii \pi/2$.  In the transformed variables we achieve this by setting $z=0$ and $y=1$, corresponding to $\lambda=-\gamma/2$ and $\mu=-\ii\pi/2$.  We also normalise each of these weights to $1$ by dividing through by $a(\lambda,\mu)$ and thus obtain
\begin{multline}\label{eq:hom}
Z_{\rm ASM}= \frac{(c/a)^{L^2}}{(2+\kappa)^{L^2/2}(2\pi \ii)^L} \oint \dots \oint \frac{\prod_{i=1}^L\prod_{j=i+1}^{L} (x_i-x_j)(1+\kappa x_j+x_ix_j)} {\prod_{i=1}^L (-x_i)^{L}} \\
\times  \prod_{i=1}^L (1+(\kappa +1)x_i )^i (1-x_i)^{-L+i-1} \ \dd x_1 \dots \dd x_L.
\end{multline}
(Note that we would finally get $(a=b=c=1)$ by setting $\gamma=\ii \pi/3$, {\it i.e.} $\kappa=-1$). Alternatively we set $\kappa=\tau-2$, and write equation \eqref{eq:hom} as the following constant term expression:
\begin{multline}
A_L(\tau)= {\rm CT}\left[ \prod_{i=1}^L\prod_{j=i+1}^{L} (x_j-x_i)(1+(\tau-2) x_j+x_ix_j)\right.\\ 
\left.\prod_{i=1}^L \frac{1}{x_i^{L-1}(1-x_i)^{L-i+1}(1+(\tau-1) x_i)^{i}} \right],
\end{multline}
which yields polynomials in $\tau$ corresponding to the enumeration of alternating sign matrices, by number of $-1$s. This latter formula is very similar to expressions obtained in \cite{DiFZJ2007,Zeilb07,FonsecaZJ09,deGier_Lascoux_Sorrell10,FonsecaNadeau11} as polynomial solutions of the $q$-deformed Knizhnik-Zamolodchikov equation and their relation to the combinatorics of alternating sign matrices and symmetric plane partitions.
\subsection{Scalar Product}
\label{sec:scalar_product}
Next we consider the result of Baxter \cite{Baxter87} specialised to the $2n \times L$ rectangular lattice depicted in  
\figref{fig:latticescprod}. This specialisation renders the norm of Bethe vectors (\ref{scprod1}), and according 
to the conventions of \Secref{sec:perimeter} we set $m=L+2n$. We have rapidities
\bea
\label{vv}
V=\{ \mu_1,\dots, \mu_{L}, \bar{\lambda}_1 , \dots , \bar{\lambda}_n , \nu_1 , \dots , \nu_n, \bar{\mu}_1 - \gamma , \dots , \bar{\mu}_L - \gamma, \bar{\nu}_1 -\gamma, \dots , \bar{\nu}_n -\gamma , \lambda_1 -\gamma , \dots , \lambda_n -\gamma \} \nn \\
\eea
and the set of left rapidities is then given by
\bea
\label{uu}
U=\{ \mu_1,\dots, \mu_{L}, \bar{\lambda}_1 , \dots , \bar{\lambda}_n , \nu_1 , \dots , \nu_n \} \; .
\eea
Here we use the definitions $\bar{\mu}_j = \mu_{L+1-j}$, $\bar{\lambda}_j = \lambda_{n+1-j}$ and $\bar{\nu}_j = \nu_{n+1-j}$. In order to characterise
the scalar product (\ref{scprod1}) we also need to specify the location of the out arrows $X$. 
We have these out arrows appearing in three distinct blocks with positions:
\bea\label{xx}
X=\{ 1,\dots,L, L+n+1, \dots, L+2n,2L+2n+1,\dots,2L+3n \}
\eea

Similarly to the case of domain wall boundaries, the norm (\ref{scprod1}) consists of the partition function $Z(X|V)$
up to a normalisation factor arising from the normalisation of the statistical weights. Thus we have
\be
N(\{\nu_i\},\{\lambda_j\}) = \prod_{i=1}^n \prod_{j=1}^L a(\lambda_i-\mu_j) a(\nu_i-\mu_j)  Z(X|V).
\ee
with $Z(X|V)$ computed using (\ref{vv}), (\ref{uu}) and (\ref{xx}).  With this specialisation the integrand of (\ref{eq:Baxter_int})
has poles only at $w_i = \mu_i$ for $i=1, \dots , L$, (see Appendix \ref{spc}) and thus the integration over this particular subset of variables
is trivial. After performing the trivial integrations and some re-arrangments detailed in Appendix~\ref{spc} we are left with

\begin{align}
\label{eq:intsp}
N(\{\nu_i\},\{\lambda_j\}) = & \frac{(-1)^{(L-1)n} (\sinh{\gamma})^{2n} }{(2\pi \ii)^{2n}} \prod_{i,j=1}^{n} \frac{1}{a(\lambda_i - \nu_j) a(\nu_j - \lambda_i) a(\nu_i - \nu_j)^2}   \nn \\
& \times \oint \dots \oint \dd{w}_{1} \dots \dd w_{2n}  \frac{\prod_{i=1}^{2n}\prod_{j=i+1}^{2n} a({w}_{j}-{w}_{i}) b({w}_{j}-{w}_{i})} {\prod_{i=1}^{2n} \prod_{j=1}^{n}  b({w}_{i}-\lambda_j) b({w}_{i}-\nu_j) } \nn\\
& \quad \times \prod_{i=1}^{n} \prod_{j=1}^{L} a({w}_{i}-\mu_j) b(\mu_j - {w}_{n+i}) \prod_{i=1}^{n} \prod_{j=1}^{n} a({w}_{i}-\nu_j)a(\nu_j - {w}_{n+i})\nn\\
&\quad\times \prod_{i=1}^{n} \prod_{j=i+1}^{n} b(\nu_j - {w}_{i}) b({w}_{2n+1-i} - \nu_j ) a(\nu_i - {w}_{j}) a({w}_{2n+1-j} - \nu_i )  \; .
\end{align}

\section{Three-point functions in $\mathcal{N}=4$ Super Yang-Mills} 
\label{sec:3pt}

The formul\ae (\ref{eq:intsp}) for the scalar product of Bethe vectors is valid for arbitrary values of the complex
parameters $\{\nu_i\}$ and $\{\lambda_j\}$. On the other hand, the expression (\ref{eq:intsp}) will yield 
the norm of the transfer matrix eigenstate (\ref{eq:BAeigvec}) only when the set $\{\nu_i\}$ is related 
to the set $\{\lambda_j\}$, i.e. $\nu_j = \lambda^{*}_j$, and with the set $\{\lambda_j\}$ subjected
to the Bethe ansatz equations (\ref{eq:bae}). In that case Slavnov's formula \cite{Slavnov_89}  provides a rather simple expression for the scalar product in terms of a determinant of a $n \times n$ matrix. For generic parameters $\{\nu_i\}$ and $\{\lambda_j\}$, one would have to consider the formul\ae of \cite{Korepin82,Korepin_book}, given in terms of a large sum over partitions. However, the recent developments in the computation 
of three-point functions in $\mathcal{N}=4$ Super Yang-Mills (SYM)  seems to require manageable expressions for the norms of Bethe vectors for general values of the parameters and here we shall illustrate how the integral formula (\ref{eq:intsp}) can be embedded in the framework of \cite{Escobedo_2011}.

Three-point correlation functions of primary operators in $\mathcal{N}=4$ SYM theory are constrained by conformal invariance to have the form
\be
\langle \mathcal{O}_i (x_i) \mathcal{O}_j (x_j) \mathcal{O}_k (x_k) \rangle \; = \frac{\sqrt{\mathcal{N}_i \mathcal{N}_j \mathcal{N}_k} \; c_{ijk}}{| x_i - x_j |^{\Delta_i + \Delta_j - \Delta_k } | x_j - x_k |^{\Delta_j + \Delta_k - \Delta_i } | x_k - x_i |^{\Delta_k + \Delta_i - \Delta_j }},
\ee
where $\mathcal{N}_j$ are normalization factors and $\Delta_i$ are the respective conformal dimensions. The dimensions  $\Delta_i$ are already fixed in the computation of two-point functions and a large literature is devoted to that problem in the planar limit of the $\mathcal{N}=4$ SYM, exploring integrable structures. For a detailed revision on this subject see for instance \cite{Beisert} and references therein. Here the quantity we wish to 
evaluate is the structure constant $c_{ijk}$, which has been worked out to leading order in \cite{Escobedo_2011} employing the spin chain picture of single trace operators introduced in \cite{Minahan_2002}. See also \cite{Okuyama_2004} and \cite{Roiban_2004} for previous works on $\mathcal{N}=4$ SYM three-point functions within the spin chain picture. 

The structure constant $c_{ijk}$ in the $\mathcal{N}=4$ SYM admits the perturbative expansion 
$c_{ijk} = \frac{1}{N_c} c^{(0)}_{ijk} + \frac{\theta}{N_c} c^{(1)}_{ijk} + \frac{\theta^2}{N_c} c^{(2)}_{ijk} + \ldots$,
where $N_c$ and $\theta$ are respectively the number of colours and the 't Hooft coupling constant. 
We shall consider the leading order term $c^{(0)}_{123}$ as described in \cite{Escobedo_2011} and in order to
compute it from our results we first need to consider a special limit usually refereed as rational limit. We start by
performing the change of variables
\be
\label{rat1}
\lambda = \bar{\gamma} (u - \tfrac{\ii}{2}) \quad \quad \mbox{and} \quad \quad  \gamma = \ii \bar{\gamma} \; ,
\ee
and we also set the inhomogeneities $\mu_j = 0$.

From (\ref{su2})-(\ref{monorep}) we find that
\be
\label{rat2}
\lim_{\genfrac{}{}{0pt}{}{\bar{\gamma} \rightarrow 0}{\mu_j \rightarrow 0}}   \frac{1}{\sinh{\bar{\gamma}}^{L}} \mathcal{T}(\lambda) = \overline{\mathcal{T}}(u) \; ,
\ee
where $\overline{\mathcal{T}}(u)$ corresponds to the monodromy matrix considered in \cite{Escobedo_2011}.
In practice the above limit corresponds to considering our results with statistical weights $a$, $b$ and $c$ given by 
\begin{eqnarray}
\label{rat}
a &=& u + \tfrac{\ii}{2}, \nn \\
b &=& u - \tfrac{\ii}{2}, \nn \\
c &=& \ii \; .
\end{eqnarray}

Now we consider three independent $XXX$-type spin chains with periodic boundary conditions which we shall
refer as $\mathcal{A}_1$, $\mathcal{A}_2$ and $\mathcal{A}_3$. Those spin chains can be obtained as the 
logarithmic derivative of the transfer matrix (\ref{transfmatrix}) up to normalisation factors after 
considering the rational limit above described. This is a standard construction and a detailed description
can be found in  \cite{Korepin_book}.
Within the spin chain picture introduced in \cite{Minahan_2002}, each single trace operator $\mathcal{O}_i$ in 
the $SU(2)$ sector of the $\mathcal{N}=4$ SYM at one-loop order is then represented by the spin chain $\mathcal{A}_i$
under the zero momentum condition.

The state of each spin chain $\mathcal{A}_j$ with $L_j$ sites is characterised by rapidities satisfying Bethe
ansatz equations. For instance we use the set of $m_1$ rapidities $\{ u_1 , \dots , u_{m_1} \}$ to characterise 
the spin chain $\mathcal{A}_1$, the $m_2$ rapidities $\{ v_1 , \dots , v_{m_2} \}$ for $\mathcal{A}_2$ and $m_3$ 
rapidities $\{ w_1 , \dots , w_{m_3} \}$ for $\mathcal{A}_3$. They are subjected to the following Bethe ansatz equations:
\begin{align}
\label{eq:BAE}
\left[ \frac{a(u_i)}{b(u_i)} \right]^{L_1} &= \prod_{\stackrel{k=1}{k \neq i}}^{m_1} - \frac{a(u_i - u_k + \frac{\ii}{2})}{a(u_k - u_i + \frac{\ii}{2}) },
\qquad \qquad i = 1, \dots , m_1\; , \nn \\ 
\left[ \frac{a(v_i)}{b(v_i)} \right]^{L_2} &= \prod_{\stackrel{k=1}{k \neq i}}^{m_2} - \frac{a(v_i - v_k + \frac{\ii}{2})}{a(v_k - v_i + \frac{\ii}{2}) },
\qquad \qquad i = 1, \dots , m_2\; , \nn \\ 
\left[ \frac{a(w_i)}{b(w_i)} \right]^{L_3} &= \prod_{\stackrel{k=1}{k \neq i}}^{m_3} - \frac{a(w_i - w_k + \frac{\ii}{2})}{a(w_k - w_i + \frac{\ii}{2}) },
\qquad \qquad i = 1, \dots , m_3 \; ,
\end{align}
and the zero momentum condition translates into
\be
\prod_{i=1}^{m_1} \frac{a(u_i)}{b(u_i)} = \prod_{i=1}^{m_2} \frac{a(v_i)}{b(v_i)} = \prod_{i=1}^{m_3} \frac{a(w_i)}{b(w_i)} = 1  \; .
\ee
We remark that now we are considering the statistical weights $a$, $b$ and $c$ given in (\ref{rat}).
Next we define the functions
\begin{align}
E_2 &= \prod_{j=1}^{m_2} \left[ \frac{a(v_j)}{b(v_j)} \right]^{L_2},   & F_2 &= \prod_{i=1}^{m_2} \prod_{j=i+1}^{m_2}  \frac{a(v_i - v_j + \frac{\ii}{2})}{b(v_i - v_j + \frac{\ii}{2})}, \nn \\
F_1 &= \prod_{i=1}^{m_1} \prod_{j=i+1}^{m_1}  \frac{a(u_i - u_j + \frac{\ii}{2})}{b(u_i - u_j + \frac{\ii}{2})}, & G_2 &= \prod_{j=1}^{m_2} \prod_{i=j+1}^{m_2}  \frac{a(v_i - v_j + \frac{\ii}{2})}{b(v_i - v_j + \frac{\ii}{2})},
\end{align}
and separate the set of rapidities $\{ u \} = \{ u_1 , \dots , u_{m_1} \}$ into two subsets
$\{ \alpha \} $ and $\{ \bar{\alpha} \} $ such that $\{ \alpha \} \cup \{ \bar{\alpha} \} = \{ u \}$. There are $2^{m_1}$
ways to partitionate $\{ u \}$ into $\{ \alpha \} $ and $\{ \bar{\alpha} \}$, and each possibility is simply recasted as
\be
\{ \alpha \} = \{ u_1 ,  \dots , u_{n_1} \} \qquad \mbox{and} \qquad \{ \bar{\alpha} \} = \{ \bar{u}_1 ,  \dots , \bar{u}_{\bar{n}_1} \} ,
\ee
with $m_1 = n_1 + \bar{n}_1$. For each partitioning we also define the functions
\begin{equation}
E(\bar{\alpha}) = \prod_{j=1}^{\bar{n}_1} \left[ \frac{a(\bar{u}_j)}{b(\bar{u}_j)} \right]^{L_1 +1},  \qquad \qquad F(\alpha ,\bar{\alpha}) = \prod_{i=1}^{n_1} \prod_{j=1}^{\bar{n}_1} \frac{a(u_i - \bar{u}_j + \frac{\ii}{2})}{b(u_i - \bar{u}_j + \frac{\ii}{2})} \; \; ,
\end{equation}
and following \cite{Escobedo_2011} the structure constant for three-point functions at leading order in planar $\mathcal{N}=4$ SYM is given by
\be
\label{eq:c0}
c^{(0)}_{123} = \sqrt{\frac{L_1 L_2 L_3}{\mathcal{N}_1 \mathcal{N}_2 \mathcal{N}_3}} \frac{E_2 G_2}{F_2 F_1} (-1)^{m_1} \sum_{\alpha ,  \bar{\alpha}} 
E(\bar{\alpha}) F(\alpha , \bar{\alpha})  S_2 (\alpha) S_3 (\bar{\alpha}) ,
\ee
under the constraint $m_1 =m_2 + m_3$. The normalisation factors are
\begin{align}
\label{eq:norms}
\mathcal{N}_1 = & (-1)^{m_1} \prod_{j=1}^{m_1}  b(u_j)^{-L_1} a(u^{*}_j)^{-L_1} a(u_j) b(u^{*}_j)  \prod_{i=1}^{m_1} \prod_{j=i+1}^{m_1} \frac{b(u_i - u_j + \frac{\ii}{2})}{a(u_i - u_j + \frac{\ii}{2})} \nn \\
& \times \prod_{j=1}^{m_1} \prod_{i=j+1}^{m_1} \frac{b(u^{*}_i - u^{*}_j + \frac{\ii}{2})}{a(u^{*}_i - u^{*}_j + \frac{\ii}{2})}  N_{L_1}(\{ u^{*} \} , \{ u \} ), \nn \\
\mathcal{N}_2 = & (-1)^{m_2} \prod_{j=1}^{m_2}  b(v_j)^{-L_2} a(v^{*}_j)^{-L_2} a(v_j) b(v^{*}_j) \prod_{i=1}^{m_2} \prod_{j=i+1}^{m_2} \frac{b(v_i - v_j + \frac{\ii}{2})}{a(v_i - v_j + \frac{\ii}{2})}  \nn \\
& \times  \prod_{j=1}^{m_2} \prod_{i=j+1}^{m_2}  \frac{b(v^{*}_i - v^{*}_j + \frac{\ii}{2})}{a(v^{*}_i - v^{*}_j + \frac{\ii}{2})} N_{L_2}(\{ v^{*} \} , \{ v \} ), \nn \\
\mathcal{N}_3 = &  (-1)^{m_3} \prod_{j=1}^{m_3} b(w_j)^{-L_3} a(w^{*}_j)^{-L_3} a(w_j) b(w^{*}_j) \prod_{i=1}^{m_3} \prod_{j=i+1}^{m_3} \frac{b(w_i - w_j + \frac{\ii}{2})}{a(w_i - w_j + \frac{\ii}{2})}  \nn \\
& \times \prod_{j=1}^{m_3} \prod_{i=j+1}^{m_3} \frac{b(w^{*}_i - w^{*}_j + \frac{\ii}{2})}{a(w^{*}_i - w^{*}_j + \frac{\ii}{2})} N_{L_3}(\{ w^{*} \} , \{ w \} )  \; \; ,
\end{align}
where $N_{L}$ denotes the scalar product of Bethe vectors (\ref{eq:intsp}) for a lattice of length $L$ after the limit (\ref{rat1}).
The function $N_{L}$ turns out to be explicitly given by
\begin{align}
\label{eq:splim}
N_L (\{\nu \},\{\lambda\})& =  \frac{1}{(2\pi \ii)^{2n}}  \prod_{i,j=1}^{n} \frac{1}{a(\lambda_i - \nu_j+ \frac{\ii}{2}) a(\nu_j - \lambda_i+ \frac{\ii}{2}) a(\nu_i - \nu_j+ \frac{\ii}{2})^2}   \nn \\
& \times \oint \dots \oint \dd{w}_{1} \dots \dd w_{2n}  \frac{\prod_{i=1}^{2n}\prod_{j=i+1}^{2n} a({w}_{j}-{w}_{i}+ \frac{\ii}{2}) b({w}_{j}-{w}_{i}+ \frac{\ii}{2})} {\prod_{i=1}^{2n} \prod_{j=1}^{n}  b({w}_{i}-\lambda_j+ \frac{\ii}{2}) b({w}_{i}-\nu_j+ \frac{\ii}{2}) } \nn\\
& \; \; \times \prod_{i=1}^{n} [ a({w}_{i}) b({w}_{n+i}) ]^L \prod_{i=1}^{n} \prod_{j=1}^{n} a({w}_{i}-\nu_j+ \tfrac{\ii}{2})a(\nu_j - {w}_{n+i}+ \tfrac{\ii}{2})\nn\\
& \; \; \times \prod_{i=1}^{n} \prod_{j=i+1}^{n} b(\nu_j - {w}_{i}+ \tfrac{\ii}{2}) b({w}_{2n+1-i} - \nu_j+ \tfrac{\ii}{2} ) a(\nu_i - {w}_{j}+ \tfrac{\ii}{2}) a({w}_{2n+1-j} - \nu_i + \tfrac{\ii}{2}),
\end{align}
with the functions $a$ and $b$ as described in (\ref{rat}). The number of excitations $n$ is understood by the number of rapidities in the argument.

In its turn the function $S_2 (\alpha)$ is given by
\begin{align}
\label{eq:s2}
S_2 (\alpha) = & \prod_{j=1}^{n_1} b(u_j)^{m_3 - L_1}  a(u_j) \prod_{j=1}^{m_2} a(v_j)^{m_3 - L_1} b(v_j)  
\prod_{j=1}^{m_2} \prod_{i=j+1}^{m_2} \frac{b(v_i - v_j + \frac{\ii}{2})}{a(v_i - v_j + \frac{\ii}{2})} \nn \\
& \times N_{L_1 - m_3} (\{ v \} , \{ u \}) \; \delta_{n_1 , m_2}\; ,
\end{align}
while
\begin{align}
\label{eq:s3}
S_3 (\bar{\alpha}) = &  \prod_{j=1}^{m_3} b(w_j)^{-m_3} a(w_j) \prod_{j=1}^{\bar{n}_1} a(\bar{u}_j)^{-m_3} b(\bar{u}_j)  \prod_{i=1}^{m_3} \prod_{j=i+1}^{m_3} \frac{b(w_i - w_j + \frac{\ii}{2})}{a(w_i - w_j + \frac{\ii}{2})} \nn \\
& \times Z_{m_3} (\{ \bar{u} \} ) Z_{m_3} (\{ w \} )   \; \delta_{\bar{n}_1 , m_3} \; .
\end{align}
In the expressions (\ref{eq:s2}) and (\ref{eq:s3}), $\delta_{i,j}$ stands for the standard Kronecker delta and $Z_{m_3}$ consists of the
six-vertex model partition function with domain wall boundaries (\ref{eq:int}) for a lattice with dimensions $m_3 \times m_3$ in the limit (\ref{rat1})-(\ref{rat}) previously discussed. 
For completeness it is given by
\begin{multline}
\label{eq:int1}
Z_{m_3} (\{\lambda\}) = \frac{(-1)^{m_3 (m_3 -1)/2}}{(2\pi)^{m_3}}  \oint \dots \oint \frac{\prod_{i=1}^{m_3}\prod_{j=i+1}^{m_3} a({w}_j-{w}_i + \tfrac{\ii}{2})b({w}_j-{w}_i+ \tfrac{\ii}{2})} {\prod_{i,j=1}^{m_3} b({w}_i-\lambda_j+ \frac{\ii}{2})} \\
\times\prod_{i=1}^{m_3} b({w}_i)^{i-1} a({w}_i)^{m_3 -i}\  \dd{w}_1 \dots \dd{w}_{m_3} \; ,
\end{multline}
recalling that $a$ and $b$ correspond to the functions in (\ref{rat}).

In this way, to compute the structure constant leading term $c^{(0)}_{123}$ according to the prescriptions of \cite{Escobedo_2011},
one needs to substitute (\ref{eq:splim}) in (\ref{eq:norms}) and (\ref{eq:s2}), and then insert the results into (\ref{eq:c0}). Similarly we also
need to substitute (\ref{eq:int1}) in (\ref{eq:s3}) and replace the results in  (\ref{eq:c0}). In summary, the evaluation of $c^{(0)}_{123}$ is
thus reduced to the evaluation of the countour integrals (\ref{eq:splim}) and (\ref{eq:int1}).

\section{Conclusion}

The main result of this paper is to write the partition function for a general $Z$-invariant six-vertex model as a multiple contour integral over a factorised polynomial kernel. This allows us to write a new representation for the partition function of the six vertex model with domain wall boundary conditions as a multiple contour integral. Likewise we derive an off-shell  multiple integral expression for the scalar product of two Bethe vectors of the six-vertex model transfer matrix. 

The study of two-point functions in $\mathcal{N}=4$ Super Yang-Mills (SYM) has advanced dramatically
due to the presence of integrable structures, and the computation of three-point functions at weak coupling seems to be benefited as well \cite{Escobedo_2011}.
In this sense we have also illustrated in Section~\ref{sec:3pt} how the integral formul\ae presented in 
Sections~\ref{sec:integralformDWBC} and \ref{sec:scalar_product} can be used in that context. Although in a different fashion, recently the connection was realised between the partition function of a system of statistical mechanics and the computation of norms of Bethe vectors in a particular limit \cite{Gromov_2011}. We hope the results presented here to help understanding this relation in more general cases.

The integral formul\ae we obtain are very similar to those in multi-variate polynomial solutions of the $q$-deformed Knizhnik-Zamolodchikov 
equation, a connection which deserves future exploration. In \cite{DiFZJ2007,Zeilb07,FonsecaZJ09,deGier_Lascoux_Sorrell10,FonsecaNadeau11} methods have been developed to compute with such multi-variate expressions which we hope will be applicable in the current context as well. We further hope that our representation for the scalar product can be extended to compute correlation functions, providing an alternative to the methods developed in \cite{KitanineMT99,IzerginKMT99,KitanineMT00}, and also constitutes an approach which may be generalised to higher rank solvable lattice models. 

\section*{Acknowledgments}
Our warm thanks goes to Michael Wheeler for instructive discussions, and we gratefully acknowledge financial support from the Australian Research Council (ARC). 

\appendix

\section{Details of scalar product computation}\label{spc}

We consider the result of Baxter \cite{Baxter87}, specialised to a $2n$ by $L$ square lattice.  {\it i.e.} we have $2n+L$ lines and 
$4n+2L$ rapidities.  (see \figref{fig:latticescprod} ) \\

Thus we have the set of right rapidities (in a clockwise direction), using the notation $\bar{\mu}_i=\mu_{L-i+1}$
\bea
U=\{ u_1,\dots,u_{2n+L} \}=\{\mu_1,\dots, \mu_{L},\bar{\gl}_{n},\dots,\bar{\gl}_1,\nu_1,\dots,\nu_{n} \}
\eea
and the full set of rapidities
\bea
V&=&\{ u_1,\dots,u_{2n+L} \}\nn \\
&=&\{\mu_1,\dots, \mu_{L},\bar{\gl}_{n},\dots,\bar{\gl}_1,\nu_1,\dots,\nu_{n},\bar{\mu}_1-\gamma ,\dots,\bar{\mu}_{L}-\gamma, \bar{\nu}_{1}-\gamma,\dots,\bar{\nu}_1-\gamma,\gl_1-\gamma,\dots,\gl_{n}-\gamma\} \nn \\
\eea

We have the out arrows appearing in three distinct blocks.  The positions of the out arrows are
\bea
X=\{ 1,\dots,L, L+n+1, \dots, L+2n,2L+2n+1,\dots,2L+3n \}
\eea
With these specialisations, we can see that a number of terms of the original $(L+2n)!$ permutations will vanish. 

If we first consider the $\phi(w_i,x_i)$ terms, we find that the first $L$ of these (corresponding to the first set of out arrows, associated to the  $\mu$ rapidities), will be zero, except for the term arising from the poles at $w_i =\mu_i$ for $i=1,\dots,L$.  It is then easy to perform the first $L$ integrals, and we find that the integral formula for the scalar product becomes:
\begin{align}\label{intF}
N(\{\nu_i\},\{\lambda_j\}) &= \frac{(-1)^{(L-1)n} (\sinh{\gamma})^{2n} }{(2\pi \ii)^{2n}} \prod_{i,j=1}^{n} \frac{1}{a(\lambda_i - \nu_j) a(\nu_j - \lambda_i) a(\nu_i - \nu_j)^2}   \nn \\
& \times \oint \dots \oint \dd{w}_{L+1} \dots \dd w_{L+2n}  \Bigg( \prod_{i=1}^n F({w}_{L+i},{w}_{L+n+i})  \Bigg) \nn \\
& \times \frac{\prod_{i=1}^{2n}\prod_{j=i+1}^{2n} a({w}_{L+j}-{w}_{L+i}) b({w}_{L+j}-{w}_{L+i})} {\prod_{i=1}^{2n} \prod_{j=1}^{n}  b({w}_{L+i}-\lambda_j) b({w}_{L+i}-\nu_j) } \nn\\
& \times \prod_{i=1}^{n} \prod_{j=1}^{L} a({w}_{L+i}-\mu_j) b(\mu_j - {w}_{L+n+i}) \prod_{i=1}^{n} \prod_{j=1}^{n} a({w}_{L+i}-\nu_j)a(\nu_j - {w}_{L+n+i})\nn\\
&\times \prod_{i=1}^{n} \prod_{j=i+1}^{n} b(\nu_j - {w}_{L+i}) b({w}_{L+2n+1-i} - \nu_j ) a(\nu_i - {w}_{L+j}) a({w}_{L+2n+1-j} - \nu_i ) \; . \nn \\
\end{align}
with
\begin{align}
F({w}_{L+i},{w}_{L+n+i})&=\prod_{j=1}^L \frac{a({w}_{L+i}-\mu_j)a({w}_{L+n+i}-\mu_j) a(\mu_j-{w}_{L+i}) a(\mu_j{w}_{L+n+i})}{(a(\gl_i-\mu_j)) (a(\nu_i-\mu_j)) a(\mu_j-\gl_i) a(\mu_j-\nu_i)} \nn \\
&\quad\times\prod_{j=1}^n \frac{a({w}_{L+i}-\gl_j)a({w}_{L+n+i}-\gl_j)a(\gl_j-{w}_{L+i})a(\gl_j-{w}_{L+n+i})}{(a(\gl_i-\gl_j))^2a(\gl_i-\nu_j)a(\nu_i-\gl_j)}
\end{align}

If we now consider the $\phi(w_i,x_i)$ terms corresponding to the next two sets of out arrows, we find that we have terms that cancel all poles at $\nu_j$ except for $j=i$ for $w_{L+i}$ and that also the only poles associated to the $w_{L+n+i}$ variables remaining in the set of $\bar{\nu}$ rapidities are again those when $i=j$. 

In summary, we have poles at:
\begin{align}\label{eq:poles}
 w_{i}&=\mu_{i} \hspace{3cm} \mbox{ for } i=1,\dots,L, \nn \\
 w_{L+i}&=\nu_{i} \mbox{ or } \gl_{\pi_{ i}} \hspace{2cm}\mbox{ for } i=1,\dots, n,\\
  w_{L+n+i}&=\nu_{n+1-i} \mbox{ or } \gl_{\pi_{\rm{n+1-i}}}  \hspace{6.5mm} \mbox{ for } i=1,\dots, n.\nn
\end{align}
This enables us to further simplify (\ref{intF}).  It is easy to see that the factor $F$ is equal to one for every non-vanishing choice of poles in (\ref{eq:poles}), and thus we obtain the expression (\ref{eq:intsp}).

\providecommand{\newblock}{}

\end{document}